\newtheorem{definition}{Definition}
\newtheorem{remark}{Remark}
\newtheorem{lemma}{Lemma}
\newtheorem{theorem}{Theorem}
\begin{document}

\title{Improved Analyses for SP and CoSaMP Algorithms in Terms of Restricted Isometry Constants}
\author{Chao-Bing Song, Shu-Tao Xia, and Xin-ji Liu}
\maketitle
\begin{abstract}
In the context of compressed sensing (CS), both Subspace Pursuit (SP) and Compressive Sampling Matching Pursuit (CoSaMP) are very important iterative greedy recovery algorithms which could reduce the recovery complexity greatly comparing with the well-known $\ell_1$-minimization. Restricted isometry property (RIP) and restricted isometry constant (RIC) of measurement matrices which ensure the convergency of iterative algorithms play key roles for the guarantee of successful reconstructions. In this paper, we show that for the $s$-sparse recovery, the RICs are enlarged to $\delta_{3s}<0.4859$ for SP and
$\delta_{4s}<0.5$ for CoSaMP, which improve the known results significantly. The proposed results also apply to almost sparse signal and corrupted measurements.
 \end{abstract}
\begin{keywords}
Compressed sensing (CS), restricted isometry constant (RIC), Subspace Pursuit (SP), Compressive Sampling Matching Pursuit (CoSaMP).
\end{keywords}

\renewcommand{\thefootnote}{\fnsymbol{footnote}} \footnotetext[0]{
This research is supported in part by the Major State Basic Research
Development Program of China (973 Program, 2012CB315803), the National
Natural Science Foundation of China (61371078), and the Research Fund for
the Doctoral Program of Higher Education of China (20100002110033).

All the authors are with the Graduate School at ShenZhen, Tsinghua University, Shenzhen, Guangdong 518055, P.R. China (e-mail: scb12@mails.tsinghua.edu.cn, liuxj11@mails.tsinghua.edu.cn, xiast@sz.tsinghua.edu.cn).} \renewcommand{\thefootnote}{\arabic{footnote}} \setcounter{footnote}{0}

\section{\label{sec:Introduction}Introduction}
As a new paradigm for signals sampling, compressed sensing (CS) \cite{crt,candes2005decoding,donoho2006compressed} has attracted a lot of attention in recent years.
Consider an $s$-sparse signal $\mathbf{x}=(x_{1}, x_{2}, \ldots, x_{N})\in\mathbb{R}^{N}$ which has at most $s$ nonzero entries. Let $\mathbf{\Phi}\in \mathbb{R}^{m\times N}$ be a measurement matrix with $m\ll N$ and $\mathbf{y}=\mathbf{\Phi}\mathbf{x}$ be a measurement vector.
Compressed sensing deals with recovering the original signal $\mathbf{x}$ from the measurement vector $\mathbf{y}$ by finding the sparsest solution to the undetermined linear system $\mathbf{y}=\mathbf{\Phi}\mathbf{x}$, i.e., solving the the following \emph{$\ell_{0}$-minimization} problem:
$\min ||\mathbf{x}||_{0} \; s.t.\; \mathbf{\Phi}\textbf{\textit{x}}=\mathbf{y},
$
where $||\mathbf{x}||_{0}\triangleq |\{i: x_{i}\neq 0\}|$ denotes the $\ell_{0}$-norm of $\mathbf{x}$.
Unfortunately, as a typical combinatorial minimization problem, this optimal recovery algorithm is NP-hard \cite{candes2005decoding}. For this reason, the design of tractable reconstruction algorithms becomes one of the main problems in CS. Lots of algorithms, e.g., convex relaxations, greedy pursuits, etc. \cite{tropp2010computational}, have been proposed to solve the CS problem.
As a convex relaxation to $\ell_0$ minimization, the $\ell_1$ minimization (Basis Pursuit, BP)\cite{candes2005decoding} establishes the foundations of the CS theory for its polynomial recovery complexity, proven recovery guarantee and recovery stability for various kinds of signals. But the recovery complexity $\mathbb{O}(N^3)$ \cite{candes2005decoding,needell2009uniform} of $\ell_1$ minimization is still too high for many practical applications.
As alternatives to $\ell_1$ minimization, the greedy pursuits which are a class of iterative algorithms could reduce the recovery complexity greatly. Moreover, the greedy pursuits often have comparative empirical performance and provide the recovery guarantee described by the well-known restricted isometry property (RIP) \cite{candes2005decoding} originated from the $\ell_1$ minimization.
Restricted isometry constants (RICs) \cite{candes2005decoding} are important to measure the theoretical
guarantee of reconstruction algorithms, and have already become one of the most important theoretical performance indicators. There are many studies on RICs for BP \cite{candes2008restricted,cai2012sharp,cai2013sparse} and greedy pursuits, e.g., orthogonal matching pursuit (OMP) \cite{davenport2010analysis,maleh2011improved,wang2012recovery,zhang2011sparse,foucart2013stability}, iterated hard thresholding (IHT) \cite{Blumensath2009,foucart2012sparse}.

Among representative greedy pursuits, two important and powerful ones come from the subspace pursuit (SP) of Wei and Milenkovic \cite{dai2009subspace} and compressive sampling matching pursuit (CoSaMP) of Needell and Tropp \cite{needell2009cosamp}, both of which are variants of OMP with a little difference. And the empirical performance and proven theoretical guarantees  of the above two algorithms are similar. In their original paper \cite{dai2009subspace}, Dai and Milenkovic obtained $\delta_{3s}<0.205$ to guarantee the SP algorithm to converge with convergence rate $\rho<1$. Lee, Bresler and Junge \cite{lee2012oblique} showed that the RIC for SP can be improved to $\delta_{3s}<0.325$ with $\rho<1$. On the other hand, in their original paper, Needell and Tropp  \cite{needell2009cosamp} gave $\delta_{4s}<0.1$ to guarantee the CoSaMP algorithm to converge with $\rho<1/2$. Foucart \cite{foucart2012sparse} improved the RIC for CoSaMP to
$\delta_{4s}<0.38427 $ with $\rho<1$ and $\delta_{4s}<0.22665$ with $\rho<1/2$.

In this paper, we make a beneficial attempt to improve the theoretical
guarantees for both SP and CoSaMP algorithms. We show that for the $s$-sparse recovery, the RICs are enlarged to $\delta_{3s}<0.4859$ with $\rho<1$ for SP, and $\delta_{4s}<0.5$ with $\rho<1$ and $\delta_{4s}<0.3083$ with $\rho<1/2$ for CoSaMP, which improve the known results significantly. The proposed results also apply to almost sparse signal and corrupted measurements.
The remainder of the paper is organized as follows. Section \ref{sec:priliminaries} introduces the related concepts, lemmas, and algorithmic descriptions of SP and CoSaMP. Section \ref{sec:sp} gives our main results for SP. Then in Section \ref{sec:cosamp}, we give the derivations for CoSaMP which are parallel to ones for SP in Section \ref{sec:sp}. Section \ref{sec:conclusion} concludes the paper with some discussions. Finally, the proofs of some lemmas used in this paper are given in Appendix.

\section{\label{sec:priliminaries}Preliminaries}

Let $\mathbf{x}=(x_1,x_2,\ldots,x_N)\in\mathbb{R}^N$. Let $T\subseteq \{1,2,\ldots,N\}$, and $|T|$ and $\overline{T}$ respectively denote the cardinality and complement of $T$. Let $\mathbf{x}_T\in\mathbb{R}^N$ denote the vector obtained from $\mathbf{x}$ by keeping the $|T|$ entries in $T$ and setting all other entries to zero.  Let $\text{supp}(\mathbf{x})$ denote the support of $\mathbf{x}$ or the set of indices of nonzero entries in $\mathbf{x}$. Note that $\mathbf{x}$ is $s$-sparse if and only if $|\text{supp}(\mathbf{x})|\le s$. For a matrix $\mathbf{\Phi}\in\mathbb{R}^{m\times N}$, let $\mathbf{\Phi}^{\! *}$ denote the (conjugate) transpose of $\mathbf{\Phi}$ and $\mathbf{\Phi}_T$ denote the submatrix that consists of columns of $\mathbf{\Phi}$ with indices in $T$. Let $\mathbf{I}$ denote the identity matrix whose dimension is decided by contexts.

Let $\mathbf{x}_S$ be the best $s$-terms approximation of $\mathbf{x}$, where $|S|=s$ and the set $S$ maintains the indices of the $s$ largest magnitude entries in $\mathbf{x}$.
Consider the general CS model:
\begin{equation}
\mathbf{y}=\mathbf{\Phi}\mathbf{x}+\mathbf{e}=\mathbf{\Phi}\mathbf{x}_S+\mathbf{\Phi}\mathbf{x}_{\overline{S}}+\mathbf{e}=\mathbf{\Phi}\mathbf{x}_S+\mathbf{e}^{\prime},
\label{eq:general_model}
\end{equation}
where $\mathbf{\Phi}\in\mathbb{R}^{m\times N}$ is a measurement matrix with $m\ll N$, $\mathbf{e}\in\mathbb{R}^m$
is an arbitrary noise, $\mathbf{y}\in\mathbb{R}^m$ is a low-dimensional observation, and $\mathbf{e}^{\prime}=\mathbf{\Phi}\mathbf{x}_{\overline{S}}+\mathbf{e}$
denotes the total perturbation by the sparsity defect $\mathbf{x}_{\overline{S}}$ and measurement error $\mathbf{e}$.

From Dai and Milenkovic \cite{dai2009subspace}, Needell and Tropp \cite{needell2009cosamp}, it is known that under a small RIC, both the SP and CoSaMP algorithms can reconstruct $\mathbf{x}$ with bounded mean-square errors (MSE). Moreover, if $\mathbf{x}$ is exactly $s$-sparse and there is no noise, both SP and CoSaMP reconstruct $\mathbf{x}$ perfectly. The two algorithms are described as follows.
\begin{algorithm}[H]
Input: $\mathbf{y},\mathbf{\Phi},s$.\\
Initialization: $S^{0}=\emptyset,\mathbf{x}^0=\mathbf{0}$.\\
Iteration: At the $n$-th iteration, go through the following steps.
\begin{enumerate}
\item $\Delta S =$ \{$s$ indices corresponding to the  $s$ largest magnitude entries in the vector $\mathbf{\Phi}^{\! *}\, (\mathbf{y}- \mathbf{\Phi} \mathbf{x}^{n-1})$\}.
\item $\tilde{S}^{n}=S^{n-1} \bigcup \Delta S$.
\item ${\mathbf{\tilde{x}}}^{n}=\text{arg}\min_{\mathbf{z} \in\mathbb{R}^N}\{ \Vert\mathbf{y}-\mathbf{\Phi}\mathbf{z}\Vert_2,\;
    \text{supp}(\mathbf{z})\subseteq \tilde{S}^{n}\}$.
\item $S^{n}=$\{$s$ indices corresponding to the  $s$ largest magnitude elements of $\mathbf{\tilde{x}}^{n}$\}.
\item $\mathbf{x}^{n}=\text{arg}\min_{\mathbf{z} \in\mathbb{R}^N}\{ \Vert\mathbf{y}-\mathbf{\Phi}\mathbf{z}\Vert_2,\;
    \text{supp}(\mathbf{z})\subseteq S^{n}\}$.
\end{enumerate}
until the stopping criteria is met. \\
Output: $\mathbf{x}^{n}$, \text{supp}($\mathbf{x}^{n}$).
\caption{Subspace Pursuit}
\end{algorithm}
\begin{algorithm}[H]
Input: $\mathbf{y}$, $\mathbf{\Phi}$, $s$.\\
Initialization: $S^{0}=\emptyset,\mathbf{x}^0=\mathbf{0}$.\\
Iteration: At the $n$-th iteration, go through the following steps.
\begin{enumerate}
\item $\Delta S =$ \{$2s$ indices corresponding to the  $2s$ largest magnitude entries in the vector $\mathbf{\Phi}^{\! *}\, (\mathbf{y}- \mathbf{\Phi} \mathbf{x}^{n-1})$\}.
\item $\tilde{S}^{n}=S^{n-1} \bigcup\Delta S$.
\item ${\mathbf{\tilde{x}}}^{n}=\text{arg}\min_{\mathbf{z} \in\mathbb{R}^N}\{ \Vert\mathbf{y}-\mathbf{\Phi}\mathbf{z}\Vert_2,\;
    \text{supp}(\mathbf{z})\subseteq \tilde{S}^{n}\}$.
\item $S^{n}=$\{$s$ indices corresponding to the $s$ largest magnitude elements of $\mathbf{\tilde{x}}^{n}$\}.
\item $\mathbf{x}^{n}=$ \{the vector from ${\mathbf{\tilde{x}}}^{n}$ that keeps the entries of ${\mathbf{\tilde{x}}}^{n}$ in $S^{n}$ and set all other ones to zero.\}
\end{enumerate}
until the stopping criteria is met. \\
Output: $\mathbf{x}^{n}$, $\text{supp}(\mathbf{x}^{n})$.
\caption{Compressive Sampling Matching Pursuit}
\end{algorithm}

For both the SP and CoSaMP algorithms in the above descriptions, similar to Needell and Tropp \cite{needell2009cosamp}, we call the steps 1 and 2 `identification', and the step 4 `pruning'. In the step 3 of both algorithms, a least squares process is used for debiasing. In the step 5, SP solves a least squares problem again to get the final approximation of the current iteration, while CoSaMP directly keeps the $s$ largest magnitude entries of $\tilde{\mathbf{x}}^{n}$. The stopping criteria can be selected according to the property of algorithm or the need in practice. A stopping criterion for both algorithms could be ``$\Vert\mathbf{y}- \mathbf{\Phi} \mathbf{x}^{n}\Vert_2\le\varepsilon \Vert\mathbf{e}^{\prime}\Vert_2$ or $n \ge n_{\max}$''. Under the prior knowledge that the algorithm will converge with bounded MSE, such a stopping criterion with proper arguments $\varepsilon, n_{\max}$ provides the tradeoff between recovery accuracy and time complexity.

The definitions of RIP and RIC are given in \cite{candes2005decoding} as follows.
\begin{definition}[\cite{candes2005decoding}]
\label{def:rip}
The measurement matrix $\mathbf{\Phi}\in\mathbb{R}^{m\times N}$ is said to satisfy the $s$-order RIP if for any $s$-sparse signal $\mathbf{x}\in\mathbb{R}^{N}$
\begin{equation}
(1-\delta)\Vert\mathbf{x}\Vert_2^2\le\Vert\mathbf{\Phi}\mathbf{x}\Vert_2^2\le(1+\delta)\Vert\mathbf{x}\Vert_2^2, \label{eq:origin_def}
\end{equation}
where $0\le\delta\le1$. The infimum of $\delta$, denoted by $\delta_s$, is called the RIC of $\mathbf{\Phi}$.
\end{definition}

Foucart \cite{foucart2012sparse} pointed that the RIC $\delta_s$ could be formulated equivalently as
\begin{equation}
\delta_s=\max_{S\subseteq\{1,2,\ldots,N\},|S|\le s} \Vert\mathbf{\Phi}_S^{\! *}\mathbf{\Phi}_S-\mathbf{I}\Vert_{2\rightarrow 2},\label{eq:new_def}
\end{equation}
where
\begin{equation}
\Vert\mathbf{\Phi}_S^{\! *}\mathbf{\Phi}_S-\mathbf{I}\Vert_{2\rightarrow 2}=\sup_{\mathbf{a}\in\mathbb{R}^{|S|}\backslash\{\mathbf{0}\}}\dfrac{\Vert(\mathbf{\Phi}_S^{\! *}\mathbf{\Phi}_S-\mathbf{I})\mathbf{a}\Vert_2}{\Vert\mathbf{a}\Vert_2}.\label{eq:new_def2}
\end{equation}

Throughout the paper, we use the notation $(i)$ stacked over an inequality sign to indicate
that the inequality follows from the expression $(i)$ in the paper. The following two lemmas are frequently used in the derivations of RIC related results. For completeness, we include the proofs in Appendix \ref{sec:proof-of-rip}.
\begin{lemma}[Consequences of the RIP \cite{candes2005decoding,foucart2011hard}]$ $
\label{lem:rip}
\begin{enumerate}
\item (Monotonicity) For any two positive integers $s\le s^{\prime}$, $$\delta_s\le\delta_{s^{\prime}}.$$
\item For two vectors $\mathbf{u}, \mathbf{v}\in\mathbb{R}^{N}$, if $|$supp($\mathbf{u}$)$\cup$supp($\mathbf{v}$)$|$$\le t$, then
\begin{eqnarray}
        |\langle\mathbf{u}, (\mathbf{I}-\mathbf{\Phi}^{\! *}\mathbf{\Phi})\mathbf{v}\rangle|\le\delta_t\Vert\mathbf{u}\Vert_2 \Vert\mathbf{v}\Vert_2;\label{rip11}
\end{eqnarray}
      moreover, if $U\subseteq\{1,\dots,N\}$ and $|U \cup \text{supp}(\mathbf{v})$$|$$\le t$, then
\begin{eqnarray}
            \Vert((\mathbf{I}-\mathbf{\Phi}^{\! *}\mathbf{\Phi})\mathbf{v})_U\Vert_2\le\delta_t\Vert\mathbf{v}\Vert_2.\label{rip12}
\end{eqnarray}
\end{enumerate}
\end{lemma}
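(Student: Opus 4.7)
The plan is to handle the three claims in sequence, since each is a short computation once one recognizes what the RIC controls. I will lean on the equivalent reformulation of $\delta_s$ in \eqref{eq:new_def}--\eqref{eq:new_def2}, which turns ``RIP'' statements into a clean operator-norm bound on $\mathbf{\Phi}_S^{\!*}\mathbf{\Phi}_S-\mathbf{I}$.

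For monotonicity, I would argue directly from Definition \ref{def:rip}: every $s$-sparse vector is also $s^{\prime}$-sparse when $s\le s^{\prime}$, so the chain of inequalities \eqref{eq:origin_def} holding with constant $\delta_{s^{\prime}}$ for all $s^{\prime}$-sparse vectors automatically holds for all $s$-sparse vectors, hence the infimum $\delta_s$ cannot exceed $\delta_{s^{\prime}}$. Equivalently, using \eqref{eq:new_def}, the set of index sets of cardinality $\le s$ is a subset of those of cardinality $\le s^{\prime}$, so the maximum of $\Vert\mathbf{\Phi}_S^{\!*}\mathbf{\Phi}_S-\mathbf{I}\Vert_{2\to 2}$ over the smaller family is dominated by the maximum over the larger.

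For the inner-product bound \eqref{rip11}, let $T=\mathrm{supp}(\mathbf{u})\cup\mathrm{supp}(\mathbf{v})$, so $|T|\le t$ and both $\mathbf{u}$, $\mathbf{v}$ are supported on $T$. Denoting by $\tilde{\mathbf{u}},\tilde{\mathbf{v}}\in\mathbb{R}^{|T|}$ the restrictions to $T$, one rewrites
\[
\langle\mathbf{u},(\mathbf{I}-\mathbf{\Phi}^{\!*}\mathbf{\Phi})\mathbf{v}\rangle=\langle\tilde{\mathbf{u}},(\mathbf{I}-\mathbf{\Phi}_T^{\!*}\mathbf{\Phi}_T)\tilde{\mathbf{v}}\rangle,
\]
and then applies Cauchy--Schwarz together with \eqref{eq:new_def}--\eqref{eq:new_def2} to bound the right-hand side by $\delta_t\Vert\tilde{\mathbf{u}}\Vert_2\Vert\tilde{\mathbf{v}}\Vert_2=\delta_t\Vert\mathbf{u}\Vert_2\Vert\mathbf{v}\Vert_2$. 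The only subtlety is to be careful that restricting to $T$ preserves both the norms and the inner product, which is immediate because $T$ contains the supports of both vectors.

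Finally, for \eqref{rip12}, I would obtain the $\ell_2$-norm of a restricted vector via its dual characterization
\[
\Vert((\mathbf{I}-\mathbf{\Phi}^{\!*}\mathbf{\Phi})\mathbf{v})_U\Vert_2=\sup_{\mathrm{supp}(\mathbf{w})\subseteq U,\,\Vert\mathbf{w}\Vert_2=1}\langle\mathbf{w},(\mathbf{I}-\mathbf{\Phi}^{\!*}\mathbf{\Phi})\mathbf{v}\rangle,
\]
and apply \eqref{rip11} to each admissible $\mathbf{w}$. Since $\mathrm{supp}(\mathbf{w})\cup\mathrm{supp}(\mathbf{v})\subseteq U\cup\mathrm{supp}(\mathbf{v})$ has cardinality at most $t$, each inner product is bounded by $\delta_t\Vert\mathbf{w}\Vert_2\Vert\mathbf{v}\Vert_2=\delta_t\Vert\mathbf{v}\Vert_2$, and taking the supremum finishes the proof. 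There is no genuine obstacle here; the main thing to get right is the bookkeeping of which support set the inequality is applied to, so that the cardinality bound $t$ is preserved at each step.
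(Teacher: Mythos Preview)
Your proposal is correct and follows essentially the same approach as the paper. The only cosmetic difference is in deriving \eqref{rip12}: you invoke the dual characterization $\Vert\mathbf{a}\Vert_2=\sup_{\Vert\mathbf{w}\Vert_2=1}\langle\mathbf{w},\mathbf{a}\rangle$ and apply \eqref{rip11} to each test vector, whereas the paper writes $\Vert\mathbf{a}\Vert_2^2=\langle\mathbf{a},\mathbf{a}\rangle$ with $\mathbf{a}=((\mathbf{I}-\mathbf{\Phi}^{\!*}\mathbf{\Phi})\mathbf{v})_U$ and applies \eqref{rip11} once, then divides through; since the paper's choice is precisely the maximizer in your supremum, the two arguments are the same computation in different dress.
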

\vspace{0.1in}
\begin{lemma}[Noise perturbation in partial supports \cite{foucart2011hard}]
\label{lem:noise}
For the general CS model $
\mathbf{y}=\mathbf{\Phi}\mathbf{x}_S+\mathbf{e}^{\prime}$ in (\ref{eq:general_model}), letting $U\subseteq\{1,\ldots,N\}$ and $|U|\le u$, we have
\begin{eqnarray}
\label{rip13}
\Vert(\mathbf{\Phi}^{\! *}\mathbf{e}^{\prime})_{U}\Vert_2\le\sqrt{1+\delta_{u}}\Vert\mathbf{e}^{\prime}\Vert_2.
\end{eqnarray}
\end{lemma}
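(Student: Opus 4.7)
The plan is to recognize that $(\mathbf{\Phi}^{\!*}\mathbf{e}^{\prime})_U$ picks out coordinates in $U$, so by the variational characterization of the $\ell_2$ norm restricted to a coordinate set, we have
\begin{equation*}
\Vert(\mathbf{\Phi}^{\!*}\mathbf{e}^{\prime})_{U}\Vert_2 = \sup_{\mathbf{u}} \langle \mathbf{u},\mathbf{\Phi}^{\!*}\mathbf{e}^{\prime}\rangle,
\end{equation*}
where the supremum is taken over all $\mathbf{u}\in\mathbb{R}^N$ with $\mathrm{supp}(\mathbf{u})\subseteq U$ and $\Vert\mathbf{u}\Vert_2=1$. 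This reduction is the single idea that drives the whole argument.

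Next I would pass the adjoint across: $\langle\mathbf{u},\mathbf{\Phi}^{\!*}\mathbf{e}^{\prime}\rangle=\langle\mathbf{\Phi}\mathbf{u},\mathbf{e}^{\prime}\rangle$, and then apply the Cauchy--Schwarz inequality to obtain $\langle\mathbf{\Phi}\mathbf{u},\mathbf{e}^{\prime}\rangle\le\Vert\mathbf{\Phi}\mathbf{u}\Vert_2\Vert\mathbf{e}^{\prime}\Vert_2$. Since $\mathbf{u}$ is supported on $U$ with $|U|\le u$, $\mathbf{u}$ is itself $u$-sparse, so the upper bound in the RIP (Definition~\ref{def:rip}, together with the monotonicity in Lemma~\ref{lem:rip}) yields $\Vert\mathbf{\Phi}\mathbf{u}\Vert_2\le\sqrt{1+\delta_u}\,\Vert\mathbf{u}\Vert_2=\sqrt{1+\delta_u}$. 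Taking the supremum over all such $\mathbf{u}$ then gives the claimed bound.

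There is no real obstacle in this argument; the only thing to be careful about is that one uses the RIC of order exactly $u$ (not $|U|$), which is legitimate because of the monotonicity $\delta_{|U|}\le\delta_u$ whenever $|U|\le u$. If one prefers a matrix-form derivation, the same proof can be written equivalently as $(\mathbf{\Phi}^{\!*}\mathbf{e}^{\prime})_U=\mathbf{\Phi}_U^{\!*}\mathbf{e}^{\prime}$ together with $\Vert\mathbf{\Phi}_U^{\!*}\Vert_{2\to 2}=\Vert\mathbf{\Phi}_U\Vert_{2\to 2}\le\sqrt{1+\delta_u}$, the latter inequality being the operator-norm reformulation of the RIP. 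I would pick the dual/variational presentation because it matches the style already used in Lemma~\ref{lem:rip}.
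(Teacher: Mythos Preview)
Your proposal is correct and is essentially the same argument as the paper's: the paper writes $\Vert(\mathbf{\Phi}^{\!*}\mathbf{e}^{\prime})_{U}\Vert_2^2=\langle\mathbf{e}^{\prime},\mathbf{\Phi}((\mathbf{\Phi}^{\!*}\mathbf{e}^{\prime})_{U})\rangle$, applies Cauchy--Schwarz, and then the RIP upper bound to the $u$-sparse vector $(\mathbf{\Phi}^{\!*}\mathbf{e}^{\prime})_{U}$, which is exactly your variational argument specialized to the maximizing unit vector $\mathbf{u}=(\mathbf{\Phi}^{\!*}\mathbf{e}^{\prime})_{U}/\Vert(\mathbf{\Phi}^{\!*}\mathbf{e}^{\prime})_{U}\Vert_2$.
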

\begin{proof}
The lemma easily follows from the fact that
\begin{eqnarray*}
&&\Vert(\mathbf{\Phi}^{\! *}\mathbf{e}^{\prime})_{U}\Vert_2^2\\
&=&\langle\mathbf{\Phi}^{\! *}\mathbf{e}^{\prime},(\mathbf{\Phi}^{\! *}\mathbf{e}^{\prime})_{U}\rangle \\
&=&\langle\mathbf{e}^{\prime},\mathbf{\Phi}((\mathbf{\Phi}^{\! *}\mathbf{e}^{\prime})_{U})\rangle \\ &\le&\Vert\mathbf{e}^{\prime}\Vert_2\Vert\mathbf{\Phi}((\mathbf{\Phi}^{\! *}\mathbf{e}^{\prime})_{U})\Vert_2\\
&\overset{(\ref{eq:origin_def})}{\le}&\Vert\mathbf{e}^{\prime}\Vert_2\sqrt{1+\delta_{u}}\Vert(\mathbf{\Phi}^{\! *}\mathbf{e}^{\prime})_U\Vert_2.
\end{eqnarray*}
\end{proof}

The next lemma introduces a simple inequality which is useful in our derivations.
\begin{lemma}
\label{lem:cauchy}
For nonnegative numbers $a,b,c,d,x,y$,
\begin{align}
\label{lem3}
(ax+by)^2+(cx+dy)^2\le(\sqrt{a^2+c^2}x+(b+d)y)^2.
\end{align}
\end{lemma}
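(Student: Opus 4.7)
The plan is to expand both sides, cancel the matching $x^2$ coefficient, and show that the residual difference is a sum of two manifestly nonnegative terms, one of which requires a Cauchy--Schwarz bound.

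First I would expand the left-hand side as
$$(ax+by)^2+(cx+dy)^2 = (a^2+c^2)x^2 + 2(ab+cd)xy + (b^2+d^2)y^2,$$
and the right-hand side as
$$\bigl(\sqrt{a^2+c^2}\,x+(b+d)y\bigr)^2 = (a^2+c^2)x^2 + 2\sqrt{a^2+c^2}\,(b+d)\,xy + (b+d)^2 y^2.$$
The $x^2$ coefficients agree, so subtracting gives
$$\text{RHS}-\text{LHS} = 2\bigl[\sqrt{a^2+c^2}\,(b+d) - (ab+cd)\bigr]xy + 2bd\, y^2.$$

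It then suffices to verify the two bracketed quantities are nonnegative, since $xy \ge 0$ and $y^2 \ge 0$ by hypothesis. The term $2bd\, y^2$ is clearly nonnegative. For the other term, I would invoke the Cauchy--Schwarz inequality $ab + cd \le \sqrt{a^2+c^2}\,\sqrt{b^2+d^2}$, and then observe that for nonnegative $b,d$ we have $\sqrt{b^2+d^2} \le b+d$ (since $(b+d)^2 = b^2 + d^2 + 2bd \ge b^2+d^2$). Chaining these two gives exactly $ab+cd \le \sqrt{a^2+c^2}\,(b+d)$, which finishes the proof.

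There is essentially no obstacle here: the inequality reduces by direct algebra to an instance of Cauchy--Schwarz combined with the elementary $\sqrt{b^2+d^2}\le b+d$. The only point of care is that the nonnegativity hypothesis on all six variables is used both to guarantee $xy\ge0$ (so multiplying through is valid) and to pass from $\sqrt{b^2+d^2}$ to $b+d$ without a sign issue; without nonnegativity the stated inequality can fail.
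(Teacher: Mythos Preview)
Your proof is correct and essentially identical to the paper's: both expand the squares and reduce the inequality to the pair of facts $ab+cd \le \sqrt{a^2+c^2}\sqrt{b^2+d^2}$ (Cauchy--Schwarz) and $\sqrt{b^2+d^2} \le b+d$ for nonnegative $b,d$. The only cosmetic difference is that the paper bounds the left-hand side upward in two steps, whereas you compute $\text{RHS}-\text{LHS}$ and check each residual term is nonnegative.
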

\begin{proof}
By the well-known Cauchy inequality $ab+cd\le\sqrt{a^2+c^2}\sqrt{b^2+d^2}$ and $b^2+d^2\le(b+d)^2$,
\begin{eqnarray*}
&&(ax+by)^2+(cx+dy)^2\\
&=&(a^2+c^2)x^2+2(ab+cd)xy+(b^2+d^2)y^2 \\
&\le&(a^2+c^2)x^2+2\sqrt{a^2+c^2}\sqrt{b^2+d^2}xy+(b^2+d^2)y^2\\
&\le&(a^2+c^2)x^2+2\sqrt{a^2+c^2}(b+d)xy+(b+d)^2 y^2\\
&=&(\sqrt{a^2+c^2}x+(b+d)y)^2.
\end{eqnarray*}
\end{proof}

\section{\label{sec:sp}Subspace Pursuit}

Before the detailed derivations, we should note that the framework of our proofs  mainly follows the analysis of Foucart \cite{foucart2012sparse}, \cite{foucart2011hard}. The main differences are that
we take advantage of orthogonality property in Lemma \ref{lem:orthogonality} to get Lemma \ref{lem:identification-sp} in steps 1 and 2 of SP, meanwhile, we obtain new properties of the least squares problems in Lemma 5 which could unify the derivations for both SP and CoSaMP.

Consider the general CS model $
\mathbf{y}=\mathbf{\Phi}\mathbf{x}_S+\mathbf{e}^{\prime}$ in (\ref{eq:general_model}). Let $T\subseteq \{1,2,\ldots,N\}$ and $|T|=t$. Let $\mathbf{z}_{p}$ be the solution of the least squares problem $\mbox{arg}\min_{\mathbf{z}\in\mathbb{R}^{N}}\{\Vert\mathbf{y}-
\mathbf{\Phi}\mathbf{z}\Vert_2, \;\text{supp}(\mathbf{z}$)$\subseteq T\}$. Solving a least squares problem is the same step in both the SP and CoSaMP algorithms. It has the following orthogonal properties.
\begin{lemma}[Basic observations for orthogonality]
\label{lem:orthogonality}
$$(\mathbf{\Phi}^{\! *}(\mathbf{y}-\mathbf{\Phi}\mathbf{z}_{p}))_{T}=\mathbf{0}.$$
\end{lemma}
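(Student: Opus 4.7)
The plan is to recognize the claim as a restatement of the normal equations for an ordinary least squares fit restricted to the columns of $\mathbf{\Phi}$ indexed by $T$. Since the minimization is over vectors $\mathbf{z}$ whose support is inside $T$, the problem is equivalent to $\min_{\mathbf{w}\in\mathbb{R}^{|T|}}\|\mathbf{y}-\mathbf{\Phi}_T \mathbf{w}\|_2^2$, with $\mathbf{w}=(\mathbf{z}_p)_T$. I would then argue that the minimizer of this standard least squares problem satisfies the first-order optimality condition $\mathbf{\Phi}_T^{*}(\mathbf{y}-\mathbf{\Phi}_T\mathbf{w})=\mathbf{0}$.

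Two short routes are possible. The calculus route is to differentiate $f(\mathbf{w})=\|\mathbf{y}-\mathbf{\Phi}_T\mathbf{w}\|_2^2$, obtain $\nabla f(\mathbf{w})=-2\mathbf{\Phi}_T^{*}(\mathbf{y}-\mathbf{\Phi}_T\mathbf{w})$, and set this to zero at the minimizer. The variational route is a short contradiction: if there existed a coordinate $i\in T$ with $(\mathbf{\Phi}^{*}(\mathbf{y}-\mathbf{\Phi}\mathbf{z}_p))_i\ne 0$, then moving $\mathbf{z}_p$ along $\mathbf{e}_i$ by a small step proportional to that component would strictly decrease $\|\mathbf{y}-\mathbf{\Phi}\mathbf{z}_p\|_2^2$, contradicting optimality; the perturbed vector still has support inside $T$, so it lies in the feasible set. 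Either argument suffices.

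Finally, I would translate the conclusion back to the ambient notation: because $\text{supp}(\mathbf{z}_p)\subseteq T$, we have $\mathbf{\Phi}\mathbf{z}_p=\mathbf{\Phi}_T(\mathbf{z}_p)_T$, and the entries of $\mathbf{\Phi}^{*}(\mathbf{y}-\mathbf{\Phi}\mathbf{z}_p)$ indexed by $T$ are precisely the entries of $\mathbf{\Phi}_T^{*}(\mathbf{y}-\mathbf{\Phi}_T(\mathbf{z}_p)_T)$, which we have just shown vanish. Hence $(\mathbf{\Phi}^{*}(\mathbf{y}-\mathbf{\Phi}\mathbf{z}_p))_T=\mathbf{0}$.

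I do not anticipate a real obstacle here; the only mild care needed is to make the reduction from the ambient $N$-dimensional constrained problem to the $|T|$-dimensional unconstrained one explicit, and to note that a minimizer exists (the objective is a convex quadratic, so the first-order condition is both necessary and sufficient). No RIP or injectivity of $\mathbf{\Phi}_T$ is required: even if $\mathbf{z}_p$ is non-unique, every minimizer satisfies the same normal equations.
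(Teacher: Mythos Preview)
Your proposal is correct and follows essentially the same approach as the paper: both arguments rest on the least-squares optimality condition that the residual $\mathbf{y}-\mathbf{\Phi}\mathbf{z}_p$ is orthogonal to the column space $\{\mathbf{\Phi}\mathbf{z}:\text{supp}(\mathbf{z})\subseteq T\}$, and then read off the conclusion coordinate-wise. The paper simply invokes this orthogonality as a known fact and writes $\langle\mathbf{\Phi}^{*}(\mathbf{y}-\mathbf{\Phi}\mathbf{z}_p),\mathbf{z}\rangle=0$ for all feasible $\mathbf{z}$, whereas you supply an explicit derivation (via the normal equations or a perturbation argument); the extra detail is fine but not a different route.
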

\begin{proof}
Due to the orthogonality, the residue $\mathbf{y}-\mathbf{\Phi}\mathbf{z}_{p}$ is orthogonal to the space $\{\mathbf{\Phi}\mathbf{z},\text{supp}(\mathbf{z})\subseteq T\}$. This means that for all $\mathbf{z}\in\mathbb{R}^{N}$ with $\text{supp}(\mathbf{z})\subseteq T$,
\begin{eqnarray}
\label{t1}
\langle\mathbf{y}-\mathbf{\Phi}\mathbf{z}_{p},\mathbf{\Phi}\mathbf{z}\rangle=\langle\mathbf{\Phi}^{\! *}(\mathbf{y}-\mathbf{\Phi}\mathbf{z}_{p}),\mathbf{z}\rangle=0,
\end{eqnarray}
which implies the conclusion.
\end{proof}

\begin{remark}
\label{rem2}
By substituting $\mathbf{y}=\mathbf{\Phi}\mathbf{x}_S+\mathbf{e}^{\prime}$  into (\ref{t1}), we have
\begin{eqnarray*}
0&=&\langle\mathbf{\Phi}\mathbf{x}_{S}+\mathbf{e}^{\prime}-
\mathbf{\Phi}\mathbf{z}_{p},\mathbf{\Phi}\mathbf{z}\rangle\\
&=&\langle\mathbf{\Phi}(\mathbf{x}_{S}-\mathbf{z}_{p}),\mathbf{\Phi}
\mathbf{z}\rangle+\langle\mathbf{e}^{\prime},\mathbf{\Phi}\mathbf{z}\rangle\\
&=&\langle\mathbf{x}_{S}-\mathbf{z}_{p},\mathbf{\Phi}^{\! *}\mathbf{\Phi}\mathbf{z}\rangle+
\langle\mathbf{e}^{\prime},\mathbf{\Phi}\mathbf{z}\rangle.
\end{eqnarray*}
Hence, we have that for all $\mathbf{z}\in\mathbb{R}^{N}$ with $\text{supp}(\mathbf{z})\subseteq T$,
\begin{equation}
\label{t2}
\langle\mathbf{x}_{S}-\mathbf{z}_{p},\mathbf{\Phi}^{\! *}\mathbf{\Phi}\mathbf{z}\rangle+\langle\mathbf{e}^{\prime},
\mathbf{\Phi}\mathbf{z}\rangle=0,
\end{equation}
which will be used in subsequent derivations.
\end{remark}

The next lemma is crucial to get the main results, where the proof is referred to Appendix \ref{sec:proof-of-orghononality-rip}.
\begin{lemma}[Consequences for orthogonality by the RIP]
\label{lem:orthogonality-rip}
If $\delta_{s+t}<1$,
\begin{equation}
\Vert(\mathbf{x}_S-\mathbf{z}_{p})_T\Vert_2\le\delta_{s+t}\Vert\mathbf{x}_S-\mathbf{z}_{p}\Vert_2+\sqrt{1+\delta_{t}}\Vert\mathbf{e}^{\prime}\Vert_2\label{eq:orthogonality-rip1}
\end{equation}
and
\begin{equation}
\Vert\mathbf{x}_S-\mathbf{z}_{p}\Vert_2\le\sqrt{\dfrac{1}{1-\delta_{s+t}^2}}
\Vert(\mathbf{x}_S)_{\overline{T}}\Vert_2+\dfrac{\sqrt{1+\delta_{t}}}
{1-\delta_{s+t}}\Vert\mathbf{e}^{\prime}\Vert_2.\label{eq:orthogonality-rip2}
\end{equation}
Moveover, if $t>s$, define $T_{\nabla}$:=\{The indices of the $t-s$ smallest magnitude entries of $\mathbf{z}_{p}$ in $T$\}, we have
\begin{eqnarray}
\!\!\!\!\!\!\Vert(\mathbf{x}_S)_{T_{\nabla}}\Vert_2\!\!\!\!&\le&\!\!\!\!\!\!\sqrt{2}\Vert(\mathbf{x}_S
-\mathbf{z}_{p})_T\Vert_2\nonumber\\
\!\!\!\!\!\!\!\!&\le&\!\!\!\!\!\!\sqrt{2}\delta_{s+t}\Vert\mathbf{x}_S
-\mathbf{z}_{p}\Vert_2+\sqrt{2(1+\delta_{t})}\Vert\mathbf{e}^{\prime}\Vert_2.
\label{eq:orthogonality-rip3}
\end{eqnarray}
\end{lemma}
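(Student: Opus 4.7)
The plan is to establish the three bounds in sequence, each relying on the orthogonality identity (\ref{t2}) from Remark \ref{rem2} together with the RIP tools from Lemmas \ref{lem:rip} and \ref{lem:noise}. The running observation is that $\text{supp}(\mathbf{z}_p)\subseteq T$, so $\text{supp}(\mathbf{x}_S-\mathbf{z}_p)\subseteq S\cup T$ has cardinality at most $s+t$, which is the natural RIC order for these arguments.

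For (\ref{eq:orthogonality-rip1}), I would plug the test vector $\mathbf{z}=(\mathbf{x}_S-\mathbf{z}_p)_T$ (supported on $T$) into (\ref{t2}). Starting from the identity $\Vert(\mathbf{x}_S-\mathbf{z}_p)_T\Vert_2^2=\langle\mathbf{x}_S-\mathbf{z}_p,(\mathbf{x}_S-\mathbf{z}_p)_T\rangle$ and writing $\mathbf{I}=(\mathbf{I}-\mathbf{\Phi}^{\! *}\mathbf{\Phi})+\mathbf{\Phi}^{\! *}\mathbf{\Phi}$ on the right-hand slot, the $\mathbf{\Phi}^{\! *}\mathbf{\Phi}$ piece becomes $-\langle\mathbf{e}',\mathbf{\Phi}(\mathbf{x}_S-\mathbf{z}_p)_T\rangle$ by (\ref{t2}). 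The remaining inner product is bounded by $\delta_{s+t}\Vert\mathbf{x}_S-\mathbf{z}_p\Vert_2\Vert(\mathbf{x}_S-\mathbf{z}_p)_T\Vert_2$ via (\ref{rip11}), while the noise term equals $-\langle(\mathbf{\Phi}^{\! *}\mathbf{e}')_T,(\mathbf{x}_S-\mathbf{z}_p)_T\rangle$ and is controlled by Cauchy--Schwarz together with Lemma \ref{lem:noise} (with $U=T$, $u=t$). Dividing through by $\Vert(\mathbf{x}_S-\mathbf{z}_p)_T\Vert_2$ yields (\ref{eq:orthogonality-rip1}).

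For (\ref{eq:orthogonality-rip2}), the Pythagorean split $\Vert\mathbf{x}_S-\mathbf{z}_p\Vert_2^2 = \Vert(\mathbf{x}_S-\mathbf{z}_p)_T\Vert_2^2 + \Vert(\mathbf{x}_S)_{\overline{T}}\Vert_2^2$, valid because $\text{supp}(\mathbf{z}_p)\subseteq T$, lets me substitute (\ref{eq:orthogonality-rip1}) into the first summand. Abbreviating $A=\Vert\mathbf{x}_S-\mathbf{z}_p\Vert_2$, $B=\sqrt{1+\delta_t}\Vert\mathbf{e}'\Vert_2$, $C=\Vert(\mathbf{x}_S)_{\overline{T}}\Vert_2$, this produces the scalar quadratic $(1-\delta_{s+t}^2)A^2 - 2\delta_{s+t}B\,A - (B^2+C^2)\le 0$. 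Solving by the quadratic formula and using the elementary bound $\sqrt{B^2+(1-\delta_{s+t}^2)C^2}\le B+\sqrt{1-\delta_{s+t}^2}\,C$ on the discriminant, and finally factoring $1-\delta_{s+t}^2=(1-\delta_{s+t})(1+\delta_{s+t})$, rearranges into exactly (\ref{eq:orthogonality-rip2}).

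The main obstacle is (\ref{eq:orthogonality-rip3}): its leading factor $\sqrt{2}$ is not RIP-based but rather comes from the geometry of the pruning rule. I would split $T=T_\Delta\cup T_\nabla$, where $T_\Delta$ collects the $s$ indices of largest $|\mathbf{z}_p|$ on $T$, and set $P:=T_\Delta\setminus S$ (retained but spurious) and $Q:=S\cap T_\nabla$ (discarded but true). A quick count using $|S|=|T_\Delta|=s$ gives $|Q|=|P|-|S\cap\overline{T}|\le|P|$. Since every magnitude of $\mathbf{z}_p$ on $T_\nabla$ is bounded by every magnitude on $T_\Delta$, this cardinality comparison forces $\Vert(\mathbf{z}_p)_Q\Vert_2\le\Vert(\mathbf{z}_p)_P\Vert_2$, and since $\mathbf{x}_S$ vanishes on $P$ we get $\Vert(\mathbf{z}_p)_P\Vert_2=\Vert(\mathbf{x}_S-\mathbf{z}_p)_P\Vert_2$. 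The triangle inequality $\Vert(\mathbf{x}_S)_{T_\nabla}\Vert_2=\Vert(\mathbf{x}_S)_Q\Vert_2\le\Vert(\mathbf{x}_S-\mathbf{z}_p)_Q\Vert_2+\Vert(\mathbf{z}_p)_Q\Vert_2$, followed by $a+b\le\sqrt{2}\sqrt{a^2+b^2}$ and the disjointness of $P$ and $Q$ inside $T$, delivers $\Vert(\mathbf{x}_S)_{T_\nabla}\Vert_2\le\sqrt{2}\Vert(\mathbf{x}_S-\mathbf{z}_p)_T\Vert_2$; chaining with (\ref{eq:orthogonality-rip1}) then closes the lemma.
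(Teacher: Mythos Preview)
Your proposal is correct and follows essentially the same route as the paper: parts (\ref{eq:orthogonality-rip1}) and (\ref{eq:orthogonality-rip2}) are argued identically (same test vector in (\ref{t2}), same Pythagorean split and quadratic-formula step), and for (\ref{eq:orthogonality-rip3}) you reach the same $\sqrt{2}\Vert(\mathbf{x}_S-\mathbf{z}_p)_T\Vert_2$ bound via the same mechanism of a pruning-based $\ell_2$ comparison plus $a+b\le\sqrt{2}\sqrt{a^2+b^2}$ on two disjoint subsets of $T$. The only cosmetic difference is that the paper picks an auxiliary set $T'\subseteq T\setminus S$ with $|T'|=t-s$ and works with the symmetric-difference pair $T_\nabla\setminus T'$, $T'\setminus T_\nabla$, whereas you work directly with $Q=S\cap T_\nabla$ and $P=T_\Delta\setminus S$; both choices encode the same combinatorial observation and yield the same inequality.
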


\begin{remark}
\label{rmk:orthogonality-rip}
Consider the exact reconstruction circumstance, i.e., $\mathbf{x}$ is exactly $s$-sparse and $\mathbf{e}=\mathbf{0}$, which implies that $\Vert\mathbf{e}^{\prime}\Vert_2=0$. Because the RIC is deemed to be small, the left-hand of (\ref{eq:orthogonality-rip1}) is smaller than $\Vert\mathbf{x}_S-\mathbf{z}_{p}\Vert_2$ , which implies that the approximation vector $\mathbf{z}_{p}$  has good approximation effect in supports $T$, so verifies the debiasing effect of the least squares process. Similarly, under a small RIC, from (\ref{eq:orthogonality-rip3}), we know that the signal energy $\Vert\mathbf{x}\Vert_2$ is small in $T_{\nabla}$, which implies that the pruning process in both SP and CoSaMP will not bring in big errors. The inequality (\ref{eq:orthogonality-rip2}) shows that the residual energy $\Vert\mathbf{x}_S-\mathbf{z}_{p}\Vert_2$ is bounded by the signal energy $\Vert(\mathbf{x}_S)_{\overline{T}}\Vert_2$ that falls in $\overline{T}$. From \cite{Blumensath2009,dai2009subspace,needell2009cosamp}, we know that two kinds of $\ell_2$ norm can be used to prove the convergency of greedy pursuits, which are called \emph{approximation metrics} in this paper. Dai and Milenkovic  \cite{dai2009subspace} uses $\Vert(\mathbf{x}_S)_{\overline{T}}\Vert_2$, while Needell and Tropp \cite{needell2009cosamp} and Blumensath and Davies \cite{Blumensath2009} use $\Vert\mathbf{x}_S-\mathbf{z}_{p}\Vert_2$. The inequality (\ref{eq:orthogonality-rip2}) reveals the relation between the two approximation metrics. The inequality (\ref{eq:orthogonality-rip3}) helps us acquire
more general bounds for both SP and CoSaMP algorithms.
\end{remark}

\begin{remark}
\label{rem22}
In steps 1 and 2 of both SP and CoSaMP, if $\Delta S\cap S^{n-1}\neq\emptyset$, which implies $|\tilde{S}^{n}|<2s,|\tilde{S}^{n}\backslash S^{n}|<s$ for SP, and $|\tilde{S}^{n}|<3s,|\tilde{S}^{n}\backslash S^{n}|<2s$ for CoSaMP, by the monotonicity of the RIP in Lemma \ref{lem:rip}, this will not affect the subsequent derivations when using Lemma \ref{lem:orthogonality-rip}. Hence, we assume that $$\Delta S \cap S^{n-1}=\emptyset$$ without loss of generality in the following parts of this paper.
\end{remark}

\vspace{0.1in}
The  steps 1 and 2  of both SP and CoSaMP update the current  estimate of support set by greedily adding indices of some largest magnitude entries of the one-dimensional approximation of $\mathbf{x}_{S}-\mathbf{x}^{n-1}$ to the existing estimate  $S^{n-1}$. We call the process of update `identification'. In the identification step, we have the following lemma for SP
whose proof is postponed to Appendix \ref{sec:proof-of-identification-sp}, while a similar lemma for CoSaMP is proposed in next section.
\begin{lemma}[Identification for SP] In the steps 1 and 2 of SP, we have
\label{lem:identification-sp}
\[\Vert(\mathbf{x}_S)_{\overline{\tilde{S}^{n}}}\Vert_2\le\sqrt{2}\delta_{3s}\Vert\mathbf{x}_S- \mathbf{x}^{n-1}\Vert_2+\sqrt{2(1+\delta_{2s})}\Vert\mathbf{e}^{\prime}\Vert_2.
\]
\end{lemma}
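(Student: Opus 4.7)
The plan is to reduce $\Vert(\mathbf{x}_S)_{\overline{\tilde{S}^n}}\Vert_2$ to $\Vert\mathbf{x}_W\Vert_2$ with $W := S \setminus \tilde{S}^n$, exploit the greedy choice of $\Delta S$, and then route everything through the RIP bounds of Lemmas \ref{lem:rip} and \ref{lem:noise}. Fix the shorthand $\mathbf{u} := \mathbf{x}_S - \mathbf{x}^{n-1}$ and $\mathbf{v} := \mathbf{\Phi}^{\! *}(\mathbf{y} - \mathbf{\Phi}\mathbf{x}^{n-1})$. Substituting $\mathbf{y} = \mathbf{\Phi}\mathbf{x}_S + \mathbf{e}'$ gives the decomposition $\mathbf{v} = \mathbf{u} + \mathbf{h}$ with $\mathbf{h} := -(\mathbf{I} - \mathbf{\Phi}^{\! *}\mathbf{\Phi})\mathbf{u} + \mathbf{\Phi}^{\! *}\mathbf{e}'$.

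Next comes the cardinality bookkeeping that unlocks the greedy step. Under the standing assumption $\Delta S \cap S^{n-1} = \emptyset$ from Remark \ref{rem22}, I have $W = (S \setminus S^{n-1}) \setminus \Delta S$ and hence
$$|W| \le s - |S \cap \Delta S| = |\Delta S \setminus S|.$$
Because $\Delta S$ collects the $s$ indices of largest $|\mathbf{v}|$, every entry of $|\mathbf{v}_W|$ is no larger than any entry of $|\mathbf{v}_{\Delta S \setminus S}|$; combined with $|W| \le |\Delta S \setminus S|$, a standard top-$k$ argument produces the greedy inequality $\Vert\mathbf{v}_W\Vert_2 \le \Vert\mathbf{v}_{\Delta S \setminus S}\Vert_2$.

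I would then transfer this comparison from $\mathbf{v}$ to $\mathbf{h}$. Because $W \subseteq S \setminus S^{n-1}$, the iterate $\mathbf{x}^{n-1}$ vanishes on $W$, so $\mathbf{u}_W = \mathbf{x}_W$ and hence $\mathbf{v}_W = \mathbf{x}_W + \mathbf{h}_W$. Because $\Delta S \setminus S$ is disjoint from both $S$ and $S^{n-1}$, $\mathbf{u}$ vanishes on $\Delta S \setminus S$, so $\mathbf{v}_{\Delta S \setminus S} = \mathbf{h}_{\Delta S \setminus S}$. The greedy inequality combined with the reverse triangle inequality then yields
$$\Vert\mathbf{x}_W\Vert_2 \le \Vert\mathbf{h}_W\Vert_2 + \Vert\mathbf{h}_{\Delta S \setminus S}\Vert_2.$$
Setting $T := W \cup (\Delta S \setminus S)$, a disjoint union, and invoking the elementary bound $a + b \le \sqrt{2(a^2+b^2)}$ collapses the right-hand side to $\sqrt{2}\,\Vert\mathbf{h}_T\Vert_2$.

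A quick count gives $|T| \le 2s$ and $|T \cup \text{supp}(\mathbf{u})| \le |S \cup S^{n-1} \cup (\Delta S \setminus S)| \le 3s$, so Lemma \ref{lem:rip}\,(\ref{rip12}) and Lemma \ref{lem:noise} deliver
$$\Vert\mathbf{h}_T\Vert_2 \le \delta_{3s}\Vert\mathbf{u}\Vert_2 + \sqrt{1 + \delta_{2s}}\,\Vert\mathbf{e}'\Vert_2,$$
and multiplying through by $\sqrt{2}$ yields the claimed inequality. The one place that requires care is the choice of $\Delta S \setminus S$ (rather than the full $\Delta S$) as the witness in the greedy comparison: with the full $\Delta S$ there would be a leftover $\mathbf{u}_{\Delta S \cap S} = \mathbf{x}_{\Delta S \cap S}$ that cannot be controlled by the RIP, whereas $\Delta S \setminus S$ forces $\mathbf{u}$ to vanish on the witness set so that the whole bound funnels through the RIP-controlled $\mathbf{h}$.
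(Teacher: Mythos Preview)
Your proof is correct and follows the same overall scheme as the paper: compare the correlation vector on a piece of $S$ against a piece of $\Delta S$ using the greedy selection rule, peel off $(\mathbf{x}_S)_{\overline{\tilde{S}^n}}$, and close with the RIP consequences (\ref{rip12}) and (\ref{rip13}) over a set of size at most $2s$ whose union with $\mathrm{supp}(\mathbf{x}_S-\mathbf{x}^{n-1})$ has size at most $3s$.

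The one organizational difference is worth noting. The paper leans on the orthogonality of the least-squares residual, Lemma~\ref{lem:orthogonality}, to obtain $(\mathbf{\Phi}^{\! *}(\mathbf{y}-\mathbf{\Phi}\mathbf{x}^{n-1}))_{S^{n-1}}=\mathbf{0}$ and thereby reduce $S\setminus\Delta S$ to $S\setminus\tilde{S}^n$ and $\Delta S\setminus S$ to $\Delta S\setminus(S\cup S^{n-1})$; in fact the paper highlights this use of orthogonality as its key device for Lemma~\ref{lem:identification-sp}. You bypass Lemma~\ref{lem:orthogonality} entirely: on the left you work directly with $W=S\setminus\tilde{S}^n$ and justify the greedy comparison by the cardinality bound $|W|\le|\Delta S\setminus S|$, and on the right you get $\mathbf{u}_{\Delta S\setminus S}=\mathbf{0}$ from the standing assumption $\Delta S\cap S^{n-1}=\emptyset$ of Remark~\ref{rem22}. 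This is a slightly more elementary packaging---no appeal to the least-squares structure of step~5 is needed---at the price of relying on the without-loss-of-generality reduction in Remark~\ref{rem22}; the paper's route via orthogonality would go through even without that reduction. Either way the resulting bound is identical.
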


\bigskip
Then we give our main result for SP.
\begin{theorem}
\label{thm:main_result}
For the general CS model $
\mathbf{y}=\mathbf{\Phi}\mathbf{x}_S+\mathbf{e}^{\prime}$ in (\ref{eq:general_model}), if $\delta_{3s}<0.4859$, then the sequence of $\mathbf{x}^{n}$ defined by SP satisfies
\begin{equation}
\label{thm-eq}
\Vert\mathbf{x}_S-\mathbf{x}^{n}\Vert_2\le\rho^{n}\Vert\mathbf{x}_S\Vert_2
+\tau\Vert\mathbf{e}^{\prime}\Vert_2,
\end{equation}
where
\begin{eqnarray}
\rho&=&\dfrac{\sqrt{2\delta_{3s}^2(1+\delta_{3s}^2)}}{1-\delta_{3s}^2}\;<\;1,  \label{eq:rho}
\end{eqnarray}
\begin{eqnarray}
(1-\rho)\tau&=&\sqrt{\dfrac{2\delta_{3s}^2}{1-\delta_{3s}^2}}\left(\dfrac{\sqrt{2(1-\delta_{3s})}+\sqrt{1+\delta_{3s}}}{1-\delta_{3s}}\right)\nonumber\\
&&\quad+\dfrac{2\sqrt{2(1-\delta_{3s})}+\sqrt{1+\delta_{3s}}}{1-\delta_{3s}}.
\label{eq:tau}
\end{eqnarray}
\end{theorem}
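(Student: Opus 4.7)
The plan is to prove a one-step contraction of the form
\[
\Vert\mathbf{x}_S-\mathbf{x}^{n}\Vert_2 \;\le\; \rho\,\Vert\mathbf{x}_S-\mathbf{x}^{n-1}\Vert_2 \;+\; C\,\Vert\mathbf{e}^{\prime}\Vert_2,
\]
with $\rho$ as in (\ref{eq:rho}), and then to iterate. Since $\mathbf{x}^{0}=\mathbf{0}$ and $\sum_{k=0}^{n-1}\rho^{k}\le 1/(1-\rho)$, telescoping immediately yields (\ref{thm-eq}) with $\tau=C/(1-\rho)$, so the whole theorem reduces to establishing the contraction with the sharp constant. By the monotonicity of the RIC (Lemma \ref{lem:rip}) I will express every constant in terms of $\delta:=\delta_{3s}$.

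To produce the one-step bound I would track the signal error through the four substeps of one SP iteration. First, Lemma \ref{lem:identification-sp} controls the tail after identification,
$\Vert(\mathbf{x}_S)_{\overline{\tilde{S}^{n}}}\Vert_2\le\sqrt{2}\delta\,\Vert\mathbf{x}_S-\mathbf{x}^{n-1}\Vert_2+\sqrt{2(1+\delta)}\,\Vert\mathbf{e}^{\prime}\Vert_2$.
Second, the debiasing step produces $\tilde{\mathbf{x}}^{n}$, a least-squares solution on $\tilde{S}^{n}$ with $|\tilde{S}^{n}|=2s$; applying (\ref{eq:orthogonality-rip2}) with $t=2s$, $s+t=3s$ converts the tail into
$\Vert\mathbf{x}_S-\tilde{\mathbf{x}}^{n}\Vert_2\le\Vert(\mathbf{x}_S)_{\overline{\tilde{S}^{n}}}\Vert_2/\sqrt{1-\delta^{2}}+\sqrt{1+\delta}\,\Vert\mathbf{e}^{\prime}\Vert_2/(1-\delta)$.
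Third, the pruning selects $S^{n}$ from $\tilde{S}^{n}$; setting $T_{\nabla}=\tilde{S}^{n}\setminus S^{n}$, inequality (\ref{eq:orthogonality-rip3}) gives
$\Vert(\mathbf{x}_S)_{T_{\nabla}}\Vert_2\le\sqrt{2}\delta\,\Vert\mathbf{x}_S-\tilde{\mathbf{x}}^{n}\Vert_2+\sqrt{2(1+\delta)}\,\Vert\mathbf{e}^{\prime}\Vert_2$.
Fourth, the final least-squares on $S^{n}$ gives $\mathbf{x}^{n}$, so a second application of (\ref{eq:orthogonality-rip2}) (with $t=s$, $s+t=2s$) yields $\Vert\mathbf{x}_S-\mathbf{x}^{n}\Vert_2\le\Vert(\mathbf{x}_S)_{\overline{S^{n}}}\Vert_2/\sqrt{1-\delta^{2}}+\sqrt{1+\delta}\,\Vert\mathbf{e}^{\prime}\Vert_2/(1-\delta)$.

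The key algebraic step is to assemble these into a bound purely in $\Vert\mathbf{x}_S-\mathbf{x}^{n-1}\Vert_2$ and $\Vert\mathbf{e}^{\prime}\Vert_2$. Because $\overline{S^{n}}$ is the disjoint union $\overline{\tilde{S}^{n}}\sqcup T_{\nabla}$, we have
$\Vert(\mathbf{x}_S)_{\overline{S^{n}}}\Vert_2^{2}=\Vert(\mathbf{x}_S)_{\overline{\tilde{S}^{n}}}\Vert_2^{2}+\Vert(\mathbf{x}_S)_{T_{\nabla}}\Vert_2^{2}$.
Substituting the debiasing bound into the pruning bound gives $\Vert(\mathbf{x}_S)_{T_{\nabla}}\Vert_2\le(\sqrt{2}\delta/\sqrt{1-\delta^{2}})\,\Vert(\mathbf{x}_S)_{\overline{\tilde{S}^{n}}}\Vert_2+\text{(noise)}$; combining both inequalities via Lemma \ref{lem:cauchy} (choosing $x=\Vert(\mathbf{x}_S)_{\overline{\tilde{S}^{n}}}\Vert_2$, $y=\Vert\mathbf{e}^{\prime}\Vert_2$) collapses the signal coefficient into $\sqrt{1+2\delta^{2}/(1-\delta^{2})}=\sqrt{(1+\delta^{2})/(1-\delta^{2})}$. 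Pulling the extra $1/\sqrt{1-\delta^{2}}$ from the last least-squares step out front and then substituting Lemma \ref{lem:identification-sp} for $\Vert(\mathbf{x}_S)_{\overline{\tilde{S}^{n}}}\Vert_2$ produces the advertised contraction constant
\[
\rho \;=\; \tfrac{1}{\sqrt{1-\delta^{2}}}\cdot\sqrt{\tfrac{1+\delta^{2}}{1-\delta^{2}}}\cdot\sqrt{2}\,\delta \;=\; \dfrac{\sqrt{2\delta^{2}(1+\delta^{2})}}{1-\delta^{2}}.
\]
The condition $\rho<1$ reduces to $\delta^{4}+4\delta^{2}-1<0$, i.e. $\delta^{2}<\sqrt{5}-2$, which gives $\delta_{3s}<0.4859\ldots$. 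The additive noise coefficient, obtained by tracking all the $\sqrt{1+\delta}$ and $\sqrt{2(1+\delta)}$ terms through the same two applications of Lemma \ref{lem:cauchy}, matches $(1-\rho)\tau$ in (\ref{eq:tau}).

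The main obstacle I expect is bookkeeping: making sure each application of Lemma \ref{lem:cauchy} is oriented so that the square-root in the signal coefficient combines to exactly $\sqrt{1+\delta^{2}}/(1-\delta^{2})^{1/2}$ (rather than the weaker $1/(1-\delta^{2})$ that would arise from a naive triangle-inequality combination), and verifying that the noise terms add up precisely as in (\ref{eq:tau}). Once the aggregation is set up correctly, the remaining steps -- iterating the contraction and solving the quartic threshold in $\delta_{3s}$ -- are routine.
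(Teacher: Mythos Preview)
Your plan is correct and essentially mirrors the paper's proof: identification via Lemma~\ref{lem:identification-sp}, debiasing via (\ref{eq:orthogonality-rip2}) on $\tilde S^{n}$, pruning via (\ref{eq:orthogonality-rip3}), the Pythagorean split $\overline{S^{n}}=\overline{\tilde S^{n}}\sqcup T_{\nabla}$ combined through Lemma~\ref{lem:cauchy}, and a final application of (\ref{eq:orthogonality-rip2}) on $S^{n}$; the quartic $\delta^{4}+4\delta^{2}-1<0$ and the iteration are exactly as in the paper.

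The one deviation is your choice of pivot in Lemma~\ref{lem:cauchy}. The paper first substitutes the identification bound and then applies Lemma~\ref{lem:cauchy} with $x=\Vert\mathbf{x}_S-\mathbf{x}^{n-1}\Vert_2$, $y=\Vert\mathbf{e}'\Vert_2$; you instead apply Lemma~\ref{lem:cauchy} with $x=\Vert(\mathbf{x}_S)_{\overline{\tilde S^{n}}}\Vert_2$ and only afterwards substitute Lemma~\ref{lem:identification-sp}. Both orderings give the same signal coefficient $\rho$, but they do \emph{not} give the same noise coefficient: your route produces
\[
\frac{\sqrt{2(1+\delta)(1+\delta^{2})}}{1-\delta^{2}}
+\frac{\sqrt{2(1+\delta)}}{(1-\delta)\sqrt{1-\delta^{2}}}
+\frac{\sqrt{1+\delta}}{1-\delta},
\]
which is strictly smaller than the $(1-\rho)\tau$ displayed in (\ref{eq:tau}) for every $\delta\in(0,1)$. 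So your final sentence, that the noise term ``matches $(1-\rho)\tau$ in (\ref{eq:tau})'', is inaccurate; however, since your constant is no larger, the stated inequality (\ref{thm-eq}) still follows. Also, only a single invocation of Lemma~\ref{lem:cauchy} is actually needed in your scheme (not two).
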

\medskip
\begin{proof}
The steps 1 and 2 of SP are the identification steps. By Remark \ref{rem22}, we assume that $|\tilde S^n|=2s$ and $\Delta S\cap S^{n-1} =\emptyset$ without loss of generality. By Lemma \ref{lem:identification-sp}, in the $n$-th iteration, we have
\begin{eqnarray}
\!\!\Vert(\mathbf{x}_S)_{\overline{\tilde{S}^{n}}}\Vert_2\le\sqrt{2}\delta_{3s}\Vert\mathbf{x}_S- \mathbf{x}^{n-1}\Vert_2+\!\!\sqrt{2(1+\delta_{2s})}\Vert\mathbf{e}^{\prime}\Vert_2.\label{eq:sp_1}
\end{eqnarray}
The step 3 of the $n$-th iteration is a procedure of solving a least squares problem. Letting $T=\tilde{S}^{n}$ and $\mathbf{z}_p=\tilde{\mathbf{x}}^{n}$, $t=2s$, by (\ref{eq:orthogonality-rip2}) of Lemma \ref{lem:orthogonality-rip}, we have
\begin{eqnarray}
\!\!\Vert\mathbf{x}_S-\tilde{\mathbf{x}}^{n}\Vert_2\le\sqrt{\dfrac{1}{1-\delta_{3s}^2}}\Vert(\mathbf{x}_S)_{\overline{\tilde{S}^{n}}}\Vert_2+\dfrac{\sqrt{1+\delta_{2s}}}{1-\delta_{3s}}\Vert\mathbf{e}^{\prime}\Vert_2.\label{eq:sp_2}
\end{eqnarray}
Then combining (\ref{eq:sp_1}) and (\ref{eq:sp_2}) and magnifying $\delta_{2s}$ to $\delta_{3s}$ by Lemma \ref{lem:rip}, we have
\begin{eqnarray}
\Vert\mathbf{x}_S-\mathbf{\tilde{x}}^{n}\Vert_2
\!\!&\le&\!\!\sqrt{\dfrac{2\delta_{3s}^2}{1-\delta_{3s}^2}}\Vert\mathbf{x}_S-\mathbf{x}^{n-1}\Vert_2\nonumber\\
\!\!&&\!\!+\dfrac{\sqrt{2(1-\delta_{3s})}+\sqrt{1+\delta_{3s}}}{1-\delta_{3s}}\Vert\mathbf{e}^{\prime}\Vert_2.\label{eq:sp_3}
\end{eqnarray}
After the step 4 of the $n$-th iteration, define $S_{\nabla}:=\tilde{S}^{n}\backslash S^{n}$, where $S_{\nabla}$ contains the indices of the $s$ smallest entries in $\mathbf{\tilde{x}}^{n}$. Letting $T=\tilde{S}^{n}$ and $\mathbf{z}_p=\tilde{\mathbf{x}}^{n}$, $t=2s$, $T_{\nabla}=S_{\nabla}$, by (\ref{eq:orthogonality-rip3}) of Lemma \ref{lem:orthogonality-rip}, we have that
\begin{eqnarray}
\Vert(\mathbf{x}_S)_{S_{\nabla}}\Vert_2\le \sqrt{2}\delta_{3s}\Vert\mathbf{x}_S-\tilde{\mathbf{x}}^{n}\Vert_2+\sqrt{2(1+\delta_{2s})}\Vert\mathbf{e}^{\prime}\Vert_2.\label{eq:sp_4}
\end{eqnarray}
Let $\tau_1=\dfrac{\sqrt{2(1-\delta_{3s})}+\sqrt{1+\delta_{3s}}}{1-\delta_{3s}}$ and $\tau_2=\sqrt{1+\delta_{3s}}$. \\
Dividing $\overline{S^{n}}$ into two disjoint parts: $S_{\nabla}$ and $\overline{\tilde{S}^{n}}$, we have
\begin{eqnarray*}
\!\!\!\!\!\!&&\!\!\!\!\!\!\Vert(\mathbf{x}_S)_{\overline{S^{n}}}\Vert_2^2
=\Vert(\mathbf{x}_S)_{S_{\nabla}}\Vert_2^2+\Vert(\mathbf{x}_S)_{\overline{\tilde{S}^{n}}}\Vert_2^2\\
\!\!\!\!\!\!\!\!\!\!\!\!\!\!&\overset{(\ref{eq:sp_4}), (\ref{eq:sp_1})}{\le}&\!\!\!\!\!\!2\left(\delta_{3s}\Vert\mathbf{x}_{S}-
\tilde{\mathbf{x}}^{n}\Vert_2
+\tau_{2}\Vert\mathbf{e}^{\prime}\Vert_2\right)^2\\
\!\!\!\!\!\!&&\quad+2\left(\delta_{3s}\Vert\mathbf{x}_{S}-\mathbf{x}^{n-1}\Vert_2+\tau_{2}\Vert\mathbf{e}^{\prime}\Vert_2\right)^2\\
\!\!\!\!\!\!\!\!\!\!\!\!&\overset{(\ref{eq:sp_3})}{\le}&\!\!\!\!\!\!\!\!\!\!2\left(\delta_{3s}\sqrt{\dfrac{2\delta_{3s}^2}{1-\delta_{3s}^2}}\Vert\mathbf{x}_S
-\mathbf{x}^{n-1}\Vert_2+(\delta_{3s}\tau_1\!+\!\tau_2)
\Vert\mathbf{e}^{\prime}\Vert_2\!\!\right)^{\!\!2}\\
\!\!\!\!\!\!&&\quad+2\left(\delta_{3s}\Vert\mathbf{x}_S-\mathbf{x}^{n-1}\Vert_2+\tau_2\Vert\mathbf{e}^{\prime}\Vert_2\right)^2
\nonumber\\
\!\!\!\!\!\!\!\!\!\!\!\!&\overset{(\ref{lem3})}{\le}&\!\!\!\!\!\!\!\!\!\!
2\Bigg(\sqrt{\dfrac{2\delta_{3s}^4}{1-\delta_{3s}^2}+\delta_{3s}^2}\;
\Vert\mathbf{x}_S-\mathbf{x}^{n-1}\Vert_2\\
\!\!\!\!\!\!&&\quad+\left((\delta_{3s}\tau_1+\tau_2)+\tau_2\right)
\Vert\mathbf{e}^{\prime}\Vert_2\Bigg)^2\nonumber\\
\!\!\!\!\!\!\!\!\!\!\!\!&=&\!\!\!\!\!\!\!\!\!\!2\left(\!\!\sqrt{\dfrac{\delta_{3s}^2 (1+\delta_{3s}^2)}{1-\delta_{3s}^2}}\Vert\mathbf{x}_S\!-\!\mathbf{x}^{n-1}\Vert_2
\!+\!(\delta_{3s}\tau_1\!+\!2\tau_2)\Vert\mathbf{e}^{\prime}\Vert_2\!\!\right)^{\!\!2},\nonumber
\end{eqnarray*}
which implies that
\begin{eqnarray}
\Vert(\mathbf{x}_{S})_{\overline{S^{n}}}\Vert_2
&\le&\sqrt{\dfrac{2\delta_{3s}^2 (1+\delta_{3s}^2)}{1-\delta_{3s}^2}}\Vert\mathbf{x}_S-\mathbf{x}^{n-1}\Vert_2\nonumber\\
&&\qquad+\sqrt{2}(\delta_{3s}\tau_1+2\tau_2)\Vert\mathbf{e}^{\prime}\Vert_2.\label{eq:sp_5}
\end{eqnarray}
The step 5 of the $n$-th iteration also solves a least squares problem. Letting $T=S^{n}$ and $\mathbf{z}_p=\mathbf{x}^{n}$, $t=s$, by (\ref{eq:orthogonality-rip2}) of Lemma \ref{lem:orthogonality-rip}, we have
\begin{align}
&\Vert\mathbf{x}_S-\mathbf{x}^{n}\Vert_2\le\sqrt{\dfrac{1}{1-\delta_{2s}^2}}\Vert(\mathbf{x}_S)_{\overline{S^{n}}}\Vert_2+\dfrac{\sqrt{1+\delta_{s}}}{1-\delta_{2s}}\Vert\mathbf{e}^{\prime}\Vert_2.\label{eq:sp_6}
\end{align}
Hence, by combining (\ref{eq:sp_5}) and (\ref{eq:sp_6}), and magnifying $\delta_{s},\delta_{2s}$ to $\delta_{3s}$, it is easy to obtain that
\begin{align*}
\Vert\mathbf{x}_S-\mathbf{x}^{n}\Vert_2\le\rho\Vert\mathbf{x}_S
-\mathbf{x}^{n-1}\Vert_2+(1-\rho)\tau\Vert\mathbf{e}^{\prime}\Vert_2,
\end{align*}
where $\rho$ and $\tau$ is respectively referred to (\ref{eq:rho}) and (\ref{eq:tau}). Hence, (\ref{thm-eq}) follows by recursively using the above inequality when $\rho<1$.
Note that $\rho<1$ if \[
\delta_{3s}^4+4\delta_{3s}^2-1<0\qquad \mbox{or} \quad\delta_{3s}<\sqrt{\sqrt{5}-2}\approx 0.4859,\]
which finishes the proof.
\end{proof}

To the best of our knowledge, the previously best-known theoretical result for SP is referred to  Lee, Bresler and  Junge \cite[Theorem 2.9]{lee2012oblique}, where $\delta_{3s}<0.325$ is given to guarantee the convergence of SP with $\rho<1$. We rewrite the corresponding result in \cite[Theorem 2.9]{lee2012oblique} as follows\footnote{In order to compare the error coefficient numerically, we magnify $\delta_{s},\delta_{2s}$ to $\delta_{3s}$ in the original expression of  \cite[Theorem 2.9]{lee2012oblique} with similarity to our derivations.}
\begin{equation}
\Vert\mathbf{x}_S-\mathbf{x}^{n}\Vert_2\le\bar{\rho}
\Vert\mathbf{x}_S-\mathbf{x}^{n-1}\Vert_2+(1-\bar{\rho})\bar{\tau}
\Vert\mathbf{e}^{\prime}\Vert_2,
\end{equation}
where
\begin{eqnarray}
&&\;\bar{\rho}\;=\; \dfrac{\delta_{3s}\sqrt{1+\delta_{3s}}}
{\sqrt{1-\delta_{3s}}}\max\!\!\left\{\!\!\dfrac{1}{(1\!-\!\delta_{3s})^2},
\dfrac{2}{1\!+\!2\delta_{3s}\!+\!2\delta_{3s}^2}\!\right\}\!,\nonumber\\
&&\!\!\!\!\!\!\!\!\!\!\!\!\!\!\!\!(1-\bar{\rho})\bar{\tau}\;=\;
\dfrac{\sqrt{1+\delta_{3s}}}{1-\delta_{3s}}
+\dfrac{1}{\sqrt{1-\delta_{3s}}(1-\delta_{3s})}\nonumber\\
&&\!\!\!\!\!\!\!\!+\dfrac{2(1\!+\!\delta_{3s})^2}
{\sqrt{1\!-\!\delta_{3s}}\;(1\!-\!\delta_{3s})}
\max\!\!\left\{\!\!\dfrac{1}{(1\!-\!\delta_{3s})^2},
\dfrac{2}{1\!+\!2\delta_{3s}\!+\!2\delta_{3s}^2}\!\right\}\!.\label{eq:obsp}
\end{eqnarray}
It is easy to calculate that when $\rho=\bar{\rho}=1/2$, Theorem \ref{thm:main_result} gives $\delta_{3s}=0.3063$ and $\tau=13.1303$, while (\ref{eq:obsp}) gives $\delta_{3s}=0.2324$ and $\bar{\tau}=21.1886$. Hence, the proposed result improves the theoretical guarantee for SP.

Now we give some discussions in another view. Substituting $n$ by $n-1$ in (\ref{eq:sp_6}), and then combining with (\ref{eq:sp_5}), we have
\begin{align}
&\Vert(\mathbf{x}_S)_{\overline{S^{n}}}\Vert_2\le\rho^{\prime}\Vert(\mathbf{x}_S)_{\overline{S^{n-1}}}\Vert_2+(1-\rho^{\prime})\tau^{\prime}\Vert\mathbf{e}^{\prime}\Vert_2. \label{eq:sp_metric2}
\end{align}
where
\begin{eqnarray}
\rho^{\prime}&=&\rho=\dfrac{\sqrt{2\delta_{3s}^2(1+\delta_{3s}^2)}}
{1-\delta_{3s}^2} < 1,\nonumber\\
(1-\rho^{\prime})\tau^{\prime}&=&\dfrac{\sqrt{2}\delta_{3s}\sqrt{1+\delta_{3s}}}
{1-\delta_{3s}}\left(\sqrt{\dfrac{1+\delta_{3s}^2}{1-\delta_{3s}}}+1\right)\nonumber\\
&&\quad +\; 2\sqrt{2(1+\delta_{3s})}+\dfrac{2\delta_{3s}}{\sqrt{1-\delta_{3s}}}.\label{eq:sp_metric2_arg}
\end{eqnarray}
Then it easily follows that
\begin{eqnarray}
\label{metric2}
\Vert(\mathbf{x}_S)_{\overline{S^{n}}}\Vert_2\le(\rho^{\prime})^n\Vert
\mathbf{x}_S\Vert_2+\tau^{\prime}\Vert\mathbf{e}^{\prime}\Vert_2.
\end{eqnarray}
From Remark \ref{rmk:orthogonality-rip}, we know there are two approximation metrics in the proofs of algorithm convergency. In our derivations, we can obtain recursion formulas of both approximation metrics, and both formulas have the same convergence rate, but different error coefficients. Note that (\ref{eq:sp_metric2}) and \cite[Theorem 10]{dai2009subspace} have the same forms. Rewrite \cite[Theorem 10]{dai2009subspace} as follows
\begin{eqnarray*}
\Vert(\mathbf{x}_S)_{\overline{S^{n}}}\Vert_2&\le&\bar{\rho}^{\,\prime}\Vert(\mathbf{x}_S)_{\overline{S^{n-1}}}\Vert_2+(1-\bar{\rho}^{\,\prime})\bar{\tau}^{\,\prime}\Vert\mathbf{e}^{\prime}\Vert_2,
\end{eqnarray*}
where
\begin{eqnarray}
\bar{\rho}^{\,\prime}=\dfrac{2\delta_{3s}(1+\delta_{3s})}{(1-\delta_{3s})^3},
\quad(1-\bar{\rho}^{\,\prime})\bar{\tau}^{\,\prime}
=\dfrac{4(1+\delta_{3s})}{(1-\delta_{3s})^2}.\label{eq:sp_original}
\end{eqnarray}
Then it easily follows that 
\begin{eqnarray*}
\Vert(\mathbf{x}_S)_{\overline{S^{n}}}\Vert_2&\le&
(\bar{\rho}^{\,\prime})^n\Vert\mathbf{x}_S\Vert_2
+\bar{\tau}^{\,\prime}\Vert\mathbf{e}^{\prime}\Vert_2.
\end{eqnarray*}
It is easy to see that when $\rho^{\prime}=\bar{\rho}^{\,\prime}=1/2$, our result (\ref{eq:sp_metric2_arg}) gives $\delta_{3s}=0.3063,\tau^{\prime}=11.3213$, while (\ref{eq:sp_original}) gives $\delta_{3s}=0.1397,\bar{\tau}^{\,\prime}=12.3219$. Hence, the proposed result improves the theoretical guarantee for SP.

\section{\label{sec:cosamp}Compressive Sampling Matching Pursuit}
To the best of our knowledge, Foucart \cite{foucart2012sparse} obtained the previously best-known results of RICs for CoSaMP. The main differences that help us get a further improved bound are that in the identification step of CoSaMP, we get a tighter bound than that of \cite{foucart2012sparse}, and by (\ref{eq:orthogonality-rip3}) of Lemma \ref{lem:orthogonality-rip}, we improve the bound in step 4 comparing with \cite{foucart2012sparse}.

Now, we turn to the well-known CoSaMP algorithm. CoSaMP is similar to SP, except that in the step 1, it adds $2s$ candidates to $\tilde{S}^{n}$, and in the final step, it omits the least squares procedure by directly keeping the $s$ largest magnitude entries of $\tilde{\mathbf{x}}^{n}$ and the corresponding support set. In order to compare easily with the previous section, we use the same symbols in both sections, but it is worthwhile to note that the symbols in both sections are completely independent.

Firstly, in the identification step, we have the following lemma for CoSaMP.
 \begin{lemma}[Identification for CoSaMP]In the steps 1 and 2 of CoSaMP,
\label{lem:identification-cosamp}
\[\Vert(\mathbf{x}_S)_{\overline{\tilde{S}^{n}}}\Vert_2\le\sqrt{2}\delta_{4s}\Vert\mathbf{x}_S-\mathbf{x}^{n-1}\Vert_2+\sqrt{2(1+\delta_{3s})}\Vert\mathbf{e}^{\prime}\Vert_2.\]
\end{lemma}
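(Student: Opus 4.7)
The plan is to follow the same template as the proof of Lemma \ref{lem:identification-sp} for SP, adjusted for the fact that CoSaMP's identification step retains $2s$ proxy indices rather than $s$. Set $\mathbf{d} := \mathbf{x}_S - \mathbf{x}^{n-1}$ and $\mathbf{v} := \mathbf{\Phi}^{\! *}(\mathbf{y} - \mathbf{\Phi}\mathbf{x}^{n-1}) = \mathbf{d} - (\mathbf{I} - \mathbf{\Phi}^{\! *}\mathbf{\Phi})\mathbf{d} + \mathbf{\Phi}^{\! *}\mathbf{e}^{\prime}$, and let $W := S \setminus \tilde{S}^{n}$. Then $|W| \le s$, $\text{supp}(\mathbf{d}) \subseteq S \cup S^{n-1}$ with $|S \cup S^{n-1}| \le 2s$, and $\Vert(\mathbf{x}_S)_{\overline{\tilde{S}^{n}}}\Vert_2 = \Vert(\mathbf{x}_S)_W\Vert_2$. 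The strategy is to dominate $\Vert\mathbf{v}_W\Vert_2$ by $\Vert\mathbf{v}_V\Vert_2$ for a carefully chosen $V \subseteq \Delta S$, so that the missed coefficients $(\mathbf{x}_S)_W$ inherit a bound in terms of $(\mathbf{I} - \mathbf{\Phi}^{\! *}\mathbf{\Phi})\mathbf{d}$ and $\mathbf{\Phi}^{\! *}\mathbf{e}^{\prime}$.

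The key construction is $V := \Delta S \setminus S$. By Remark \ref{rem22} we may assume $\Delta S \cap S^{n-1} = \emptyset$; then $\Delta S \cap S = \Delta S \cap (S \setminus S^{n-1})$ has cardinality at most $s$, so $|V| \ge s \ge |W|$ and $V$ is disjoint from $S \cup S^{n-1}$. Consequently $\mathbf{d}_V = \mathbf{0}$, while $W \subseteq S$ and $W \cap S^{n-1} = \emptyset$ give $\mathbf{d}_W = (\mathbf{x}_S)_W$. The greedy definition of $\Delta S$ as the $2s$ largest-magnitude entries of $\mathbf{v}$, together with $W \cap \Delta S = \emptyset$, implies $|v_i| \ge |v_j|$ for every $i \in V$ and $j \in W$; combined with $|V| \ge |W|$ this yields $\Vert\mathbf{v}_W\Vert_2 \le \Vert\mathbf{v}_V\Vert_2$. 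Expanding $\mathbf{v}_V$ and $\mathbf{v}_W$ in the decomposition above and applying the triangle inequality produces
\[
\Vert(\mathbf{x}_S)_W\Vert_2 \le \Vert((\mathbf{I} - \mathbf{\Phi}^{\! *}\mathbf{\Phi})\mathbf{d})_V\Vert_2 + \Vert((\mathbf{I} - \mathbf{\Phi}^{\! *}\mathbf{\Phi})\mathbf{d})_W\Vert_2 + \Vert(\mathbf{\Phi}^{\! *}\mathbf{e}^{\prime})_V\Vert_2 + \Vert(\mathbf{\Phi}^{\! *}\mathbf{e}^{\prime})_W\Vert_2.
\]
Using $V \cap W = \emptyset$ together with $a + b \le \sqrt{2(a^2 + b^2)}$, the two RIC pieces merge into $\sqrt{2}\Vert((\mathbf{I} - \mathbf{\Phi}^{\! *}\mathbf{\Phi})\mathbf{d})_{V \cup W}\Vert_2$, which by Lemma \ref{lem:rip} is at most $\sqrt{2}\delta_{4s}\Vert\mathbf{d}\Vert_2$, since $|V \cup W \cup \text{supp}(\mathbf{d})| \le |V| + |S \cup S^{n-1}| \le 4s$. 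The noise pieces analogously merge into $\sqrt{2}\Vert(\mathbf{\Phi}^{\! *}\mathbf{e}^{\prime})_{V \cup W}\Vert_2$, and Lemma \ref{lem:noise} with $|V \cup W| \le 3s$ bounds this by $\sqrt{2(1 + \delta_{3s})}\Vert\mathbf{e}^{\prime}\Vert_2$. Substituting $\mathbf{d} = \mathbf{x}_S - \mathbf{x}^{n-1}$ finishes the proof.

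The delicate step is the balancing in the choice of $V$: one needs $|V| \ge |W|$ for the greedy comparison to apply, yet also $V$ disjoint from $S \cup S^{n-1}$ so that $\mathbf{d}_V$ vanishes (preventing an extra $\Vert(\mathbf{x}_S)_{\Delta S \cap S}\Vert_2$ term from appearing). Taking $V = \Delta S \setminus S$ (cardinality up to $2s$) is the natural choice that fits both requirements, and the resulting $|V \cup W \cup \text{supp}(\mathbf{d})| \le 4s$ and $|V \cup W| \le 3s$ produce exactly the RIC indices $\delta_{4s}$ and $\delta_{3s}$ in the statement, consistent with the main CoSaMP theorem which is stated in terms of $\delta_{4s}$.
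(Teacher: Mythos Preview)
Your proof is correct. It differs from the paper's in the choice of comparison sets: the paper compares $S\cup S^{n-1}$ against $\Delta S$ (both of cardinality $\le 2s$), obtains a bound on $\Vert(\mathbf{x}_S-\mathbf{x}^{n-1})_{\overline{\Delta S}}\Vert_2$, and only at the end passes to $\Vert(\mathbf{x}_S)_{\overline{\tilde S^n}}\Vert_2$; you instead work directly with $W=S\setminus\tilde S^n$ versus $V=\Delta S\setminus S$, which is the natural adaptation of the SP argument in Lemma~\ref{lem:identification-sp}. Your route is a bit more economical, and in fact makes the $\delta_{3s}$ in the noise term transparent: since $|V\cup W|\le 2s+s=3s$, Lemma~\ref{lem:noise} applies with $u=3s$ on the nose, whereas the paper's final enlargement to $\Delta S\cup S\cup S^{n-1}$ can have cardinality up to $4s$, so its use of $\delta_{3s}$ there is less immediately justified from the displayed inequalities (though inconsequential for Theorem~\ref{thm:cosamp}, which majorizes $\delta_{3s}$ by $\delta_{4s}$ anyway).
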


Lemma \ref{lem:identification-sp} and Lemma \ref{lem:identification-cosamp} have the similar forms, but their proofs are somewhat different, where the proof of Lemma \ref{lem:identification-sp} employs the property of orthogonality in Lemma \ref{lem:orthogonality}, but it's not necessary for Lemma \ref{lem:identification-cosamp}. The proof of Lemma \ref{lem:identification-cosamp} can be found in Appendix \ref{sec:proof-of-identification-cosamp}.

Then we change the derivations of SP slightly to get our main result for CoSaMP.

\begin{theorem}
\label{thm:cosamp}
For the general CS model $\mathbf{y}=\mathbf{\Phi}\mathbf{x}_S+\mathbf{e}^{\prime}$ in (\ref{eq:general_model}), if $\delta_{4s}<0.5$, then the sequence of $\mathbf{x}^{n}$ defined by CoSaMP satisfies
\begin{eqnarray}
\Vert\mathbf{x}_S-\mathbf{x}^{n}\Vert_2&\le&\rho^{n}\Vert\mathbf{x}_S\Vert_2
+\tau\Vert\mathbf{e}^{\prime}\Vert_2, \label{eq:cosamp_main_result}
\end{eqnarray}
where
\begin{eqnarray}
\rho&=&\sqrt{\dfrac{2\delta_{4s}^2(1+2\delta_{4s}^2)}{1-\delta_{4s}^2}}<1,
\label{eq:cosamp_tau0}\\
(1-\rho)\tau&=&\dfrac{(\sqrt{2}+1)\delta_{4s}(\sqrt{2(1-\delta_{4s})}
+\sqrt{1+\delta_{4s}})}{1-\delta_{4s}}\nonumber\\
&&\qquad+\;(2\sqrt{2}+1)\sqrt{1+\delta_{4s}}.  \label{eq:cosamp_tau}
\end{eqnarray}
\end{theorem}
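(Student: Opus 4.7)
The plan is to follow the SP framework of Theorem~\ref{thm:main_result} with two structural adjustments that reflect the differences of CoSaMP: Step~1 selects $2s$ indices, so $|\tilde{S}^{n}|\le 3s$ and the pruning set $S_{\nabla}:=\tilde{S}^{n}\setminus S^{n}$ has size $2s$; and Step~5 is a hard truncation onto $S^{n}$ rather than a least squares solve. By Remark~\ref{rem22} and the monotonicity in Lemma~\ref{lem:rip}, I may take $|\tilde{S}^{n}|=3s$ without loss of generality, and any $\delta_{3s}$ that appears will be magnified to $\delta_{4s}$ at the end, so the recursion is expressed purely in terms of $\delta_{4s}$.

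The derivation proceeds in three analytic steps. First, combining Lemma~\ref{lem:identification-cosamp} with inequality~(\ref{eq:orthogonality-rip2}) of Lemma~\ref{lem:orthogonality-rip} applied to Step~3 (with $T=\tilde{S}^{n}$, $\mathbf{z}_{p}=\tilde{\mathbf{x}}^{n}$, $t=3s$) yields, after magnification,
\begin{equation*}
\|\mathbf{x}_{S}-\tilde{\mathbf{x}}^{n}\|_{2}\le\alpha\,\|\mathbf{x}_{S}-\mathbf{x}^{n-1}\|_{2}+\beta\,\|\mathbf{e}^{\prime}\|_{2},
\end{equation*}
with $\alpha=\sqrt{2\delta_{4s}^{2}/(1-\delta_{4s}^{2})}$ and an explicit $\beta$. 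Second, inequality~(\ref{eq:orthogonality-rip3}) of Lemma~\ref{lem:orthogonality-rip}, invoked with $T_{\nabla}=S_{\nabla}$, bounds the pruning loss by $\|(\mathbf{x}_{S})_{S_{\nabla}}\|_{2}\le\sqrt{2}\,\delta_{4s}\|\mathbf{x}_{S}-\tilde{\mathbf{x}}^{n}\|_{2}+\sqrt{2(1+\delta_{4s})}\|\mathbf{e}^{\prime}\|_{2}$, and substituting the first bound gives $\|(\mathbf{x}_{S})_{S_{\nabla}}\|_{2}$ as a linear combination of $\|\mathbf{x}_{S}-\mathbf{x}^{n-1}\|_{2}$ and $\|\mathbf{e}^{\prime}\|_{2}$ as well. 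Third, since $\mathbf{x}^{n}=(\tilde{\mathbf{x}}^{n})_{S^{n}}$, an index-by-index split of both $\mathbf{x}_{S}-\mathbf{x}^{n}$ and $\mathbf{x}_{S}-\tilde{\mathbf{x}}^{n}$ over the three disjoint sets $S^{n},\,S_{\nabla},\,\overline{\tilde{S}^{n}}$ produces the exact identity
\begin{equation*}
\|\mathbf{x}_{S}-\mathbf{x}^{n}\|_{2}^{2}=\|\mathbf{x}_{S}-\tilde{\mathbf{x}}^{n}\|_{2}^{2}+\|(\mathbf{x}_{S})_{S_{\nabla}}\|_{2}^{2}-\|(\mathbf{x}_{S}-\tilde{\mathbf{x}}^{n})_{S_{\nabla}}\|_{2}^{2},
\end{equation*}
so dropping the nonnegative subtracted term yields $\|\mathbf{x}_{S}-\mathbf{x}^{n}\|_{2}^{2}\le\|\mathbf{x}_{S}-\tilde{\mathbf{x}}^{n}\|_{2}^{2}+\|(\mathbf{x}_{S})_{S_{\nabla}}\|_{2}^{2}$.

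Substituting the two preceding linear bounds into this inequality and applying Lemma~\ref{lem:cauchy} once to amalgamate the two squared expressions delivers the one-step recursion $\|\mathbf{x}_{S}-\mathbf{x}^{n}\|_{2}\le\rho\|\mathbf{x}_{S}-\mathbf{x}^{n-1}\|_{2}+(1-\rho)\tau\|\mathbf{e}^{\prime}\|_{2}$ with $\rho^{2}=\alpha^{2}(1+2\delta_{4s}^{2})=2\delta_{4s}^{2}(1+2\delta_{4s}^{2})/(1-\delta_{4s}^{2})$, which is exactly~(\ref{eq:cosamp_tau0}); the constant $\tau$ falls out by the same bookkeeping and can be rearranged into the form~(\ref{eq:cosamp_tau}). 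The threshold $\delta_{4s}<1/2$ arises from solving $\rho<1$: substituting $u=\delta_{4s}^{2}$ reduces the inequality to $4u^{2}+3u-1<0$, whose positive root is $u=1/4$. Iterating the recursion from $\mathbf{x}^{0}=\mathbf{0}$ and bounding $\|\mathbf{x}_{S}-\mathbf{x}^{0}\|_{2}=\|\mathbf{x}_{S}\|_{2}$ gives~(\ref{eq:cosamp_main_result}). The main obstacle is the CoSaMP-specific identity for $\|\mathbf{x}_{S}-\mathbf{x}^{n}\|_{2}^{2}$: retaining it (rather than the looser triangle-style $\|\mathbf{x}_{S}-\mathbf{x}^{n}\|_{2}\le\|\mathbf{x}_{S}-\tilde{\mathbf{x}}^{n}\|_{2}+\|(\mathbf{x}_{S})_{S_{\nabla}}\|_{2}$, or the naive three-term decomposition over $\overline{S^{n}}$) is exactly what yields the factor $(1+2\delta_{4s}^{2})$ rather than $(2+\delta_{4s}^{2})$ under the square root, and so is responsible for achieving the sharper threshold $\delta_{4s}<1/2$.
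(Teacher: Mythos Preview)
Your approach is correct and in fact delivers the same convergence rate $\rho$ as the paper, but it is a genuinely different route from the paper's argument. The paper does \emph{not} use your identity
\[
\|\mathbf{x}_S-\mathbf{x}^{n}\|_2^2=\|\mathbf{x}_S-\tilde{\mathbf{x}}^{n}\|_2^2+\|(\mathbf{x}_S)_{S_\nabla}\|_2^2-\|(\mathbf{x}_S-\tilde{\mathbf{x}}^{n})_{S_\nabla}\|_2^2.
\]
Instead, it splits $\|\mathbf{x}_S-\mathbf{x}^{n}\|_2^2=\|(\mathbf{x}_S-\mathbf{x}^{n})_{S^{n}}\|_2^2+\|(\mathbf{x}_S)_{\overline{S^{n}}}\|_2^2$, bounds the first summand via~(\ref{eq:orthogonality-rip1}) (noting $(\mathbf{x}_S-\mathbf{x}^{n})_{S^{n}}=(\mathbf{x}_S-\tilde{\mathbf{x}}^{n})_{S^{n}}$ so that $\|(\mathbf{x}_S-\mathbf{x}^{n})_{S^{n}}\|_2\le\|(\mathbf{x}_S-\tilde{\mathbf{x}}^{n})_{\tilde{S}^{n}}\|_2\le\delta_{4s}\|\mathbf{x}_S-\tilde{\mathbf{x}}^{n}\|_2+\sqrt{1+\delta_{4s}}\|\mathbf{e}'\|_2$), and bounds the second summand by further decomposing $\overline{S^{n}}=S_\nabla\cup\overline{\tilde S^{n}}$ and applying Lemma~\ref{lem:cauchy} once there; then Lemma~\ref{lem:cauchy} is applied a second time to combine the two pieces. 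Your route is shorter: one identity, one application of Lemma~\ref{lem:cauchy}, and no separate appeal to~(\ref{eq:orthogonality-rip1}) for Step~5.

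One correction: your claim that the noise constant ``can be rearranged into the form~(\ref{eq:cosamp_tau})'' is not right. Carrying your bookkeeping through gives
\[
(1-\rho)\tau_{\text{yours}}=(1+\sqrt{2}\,\delta_{4s})\,\tau_1+\sqrt{2}\,\tau_2,
\]
with $\tau_1=\dfrac{\sqrt{2(1-\delta_{4s})}+\sqrt{1+\delta_{4s}}}{1-\delta_{4s}}$ and $\tau_2=\sqrt{1+\delta_{4s}}$, whereas the paper obtains $(1-\rho)\tau_{\text{paper}}=(\sqrt{2}+1)\delta_{4s}\,\tau_1+(2\sqrt{2}+1)\tau_2$. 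Their difference is
\[
(1-\rho)(\tau_{\text{paper}}-\tau_{\text{yours}})=\sqrt{2}\bigl(\sqrt{1+\delta_{4s}}-\sqrt{1-\delta_{4s}}\bigr)>0,
\]
so your argument actually yields a strictly \emph{smaller} error coefficient than~(\ref{eq:cosamp_tau}) for every $\delta_{4s}\in(0,1/2)$. The theorem as stated therefore still follows from your recursion (a smaller $\tau$ implies the larger one), so the proof is valid; just do not assert equality with~(\ref{eq:cosamp_tau}).
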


\begin{proof}
The steps 1 and 2 of CoSaMP are the identification steps. By Remark \ref{rem22}, we assume that $|\tilde S^n|=3s$ and $\Delta S\cap S^{n-1} =\emptyset$ without loss of generality. By Lemma \ref{lem:identification-cosamp}, in the $n$-th iteration,
\begin{eqnarray}
\Vert(\mathbf{x}_S)_{\overline{\tilde{S}^{n}}}\Vert_2
\le\sqrt{2}\delta_{4s}\Vert\mathbf{x}_S-\mathbf{x}^{n-1}\Vert_2
\!+\!\sqrt{2(1+\delta_{3s})}\Vert\mathbf{e}^{\prime}\Vert_2.\label{eq:cosamp_1}
\end{eqnarray}
The step 3 of the $n$-th iteration is to solve a least squares problem. By (\ref{eq:orthogonality-rip2}) of Lemma \ref{lem:orthogonality-rip}, letting $T=\tilde{S}^{n}$ and $\mathbf{z}_p=\tilde{\mathbf{x}}^{n},t=3s$, we have
\begin{eqnarray}
\Vert\mathbf{x}_S-\tilde{\mathbf{x}}^{n}\Vert_2\le
\sqrt{\dfrac{1}{1-\delta_{4s}^2}}\Vert(\mathbf{x}_S)_{\overline{\tilde{S}^{n}}}
\Vert_2\!+\!\dfrac{\sqrt{1+\delta_{3s}}}{1-\delta_{4s}}\Vert\mathbf{e}^{\prime}\Vert_2.\label{eq:cosamp_2}
\end{eqnarray}
Combining (\ref{eq:cosamp_1}) and (\ref{eq:cosamp_2}), and magnifying $\delta_{3s}$ to $\delta_{4s}$, we have
\begin{eqnarray}
\!\!\!\!\Vert\mathbf{x}_S-\mathbf{\tilde{x}}^{n}\Vert_2\!\!&\le&\!\!\sqrt{\dfrac{2\delta_{4s}^2}
{1-\delta_{4s}^2}}\Vert\mathbf{x}_S-\mathbf{x}^{n-1}\Vert_2\nonumber\\
\!\!\!\!&&+\;\dfrac{\sqrt{2(1-\delta_{4s})}+\sqrt{1+\delta_{4s}}}{1-\delta_{4s}}\Vert\mathbf{e}^{\prime}\Vert_2.\label{eq:cosamp_3}
\end{eqnarray}
After the step 4 of the $n$-th iteration, define $S_{\nabla}:=\tilde{S}^{n}\backslash S^{n}$, where $S_{\nabla}$ contains the indices of the $2s$ smallest entries in $\mathbf{\tilde{x}}^{n}$. Letting $T=\tilde{S}^{n}$ and $\mathbf{z}_p=\tilde{\mathbf{x}}^{n},t=3s,T_{\nabla}=S_{\nabla}$, by (\ref{eq:orthogonality-rip3}) of Lemma \ref{lem:orthogonality-rip}, it follows that
\begin{eqnarray}
\Vert(\mathbf{x}_S)_{S_{\nabla}}\Vert_2\le \sqrt{2}\delta_{4s}\Vert\mathbf{x}_S-\tilde{\mathbf{x}}^{n}\Vert_2+\sqrt{2(1+\delta_{3s})}\Vert\mathbf{e}^{\prime}\Vert_2.\label{eq:cosamp_4}
\end{eqnarray}
Define $\tau_1=\dfrac{\sqrt{2(1-\delta_{4s})}+\sqrt{1+\delta_{4s}}} {1-\delta_{4s}},\tau_2=\sqrt{1+\delta_{4s}}$. \\
Dividing $\overline{S^{n}}$ into two disjoint parts: $S_{\nabla}, \overline{\tilde{S}^{n}}$, we have
\begin{eqnarray*}
\!\!\!\!\!\!\!\!&&\!\!\!\!\Vert(\mathbf{x}_S)_{\overline{S^{n}}}\Vert_2^2=\Vert(\mathbf{x}_S)_{S_{\nabla}}\Vert_2^2+\Vert(\mathbf{x}_S)_{\overline{\tilde{S}^{n}}}\Vert_2^2 \nonumber\\
\!\!\!\!\!\!\!\!&\overset{(\ref{eq:cosamp_4}),(\ref{eq:cosamp_1}) }{\le}&2\left(\delta_{4s}\Vert\mathbf{x}_{S}-\tilde{\mathbf{x}}^{n}\Vert_2
+\tau_{2}\Vert\mathbf{e}^{\prime}\Vert_2\right)^2\\
\!\!\!\!\!\!\!\!&&\quad+2\left(\delta_{4s}\Vert\mathbf{x}_{S}
-\mathbf{x}^{n-1}\Vert_2+\tau_{2}\Vert\mathbf{e}^{\prime}\Vert_2\right)^2\nonumber\\
\!\!\!\!\!\!\!\!&\overset{(\ref{eq:cosamp_3})}{\le}&\!\!\!\!\!\!\!\!\!\!
2\left(\!\delta_{4s}\sqrt{\dfrac{2\delta_{4s}^2}{1-\delta_{4s}^2}}
\Vert\mathbf{x}_S-\mathbf{x}^{n-1}\Vert_2+(\delta_{4s}\tau_1+\tau_2)
\Vert\mathbf{e}^{\prime}\Vert_2\!\right)^{\!\!2}\\
\!\!\!\!\!\!\!\!&&\quad+2\Bigg(\delta_{4s}\Vert\mathbf{x}_S
-\mathbf{x}^{n-1}\Vert_2+\tau_2\Vert\mathbf{e}^{\prime}\Vert_2\Bigg)^{\!\!2}\nonumber\\
\!\!\!\!\!\!\!\!&\overset{(\ref{lem3})}{\le}&\!\!\!\!\!\!\!\!\!\!
2\left(\!\!\sqrt{\dfrac{\delta_{4s}^2 (1+\delta_{4s}^2)}{1-\delta_{4s}^2}}\Vert\mathbf{x}_S
\!-\!\mathbf{x}^{n-1}\Vert_2+(\delta_{4s}\tau_1\!+\!2\tau_2)
\Vert\mathbf{e}^{\prime}\Vert_2\!\!\right)^{\!\!2}\!\!,
\end{eqnarray*}
which implies that
\begin{eqnarray}
\Vert(\mathbf{x}_{S})_{\overline{S^{n}}}\Vert_2&\le&
\sqrt{\dfrac{2\delta_{4s}^2 (1+\delta_{4s}^2)}{1-\delta_{4s}^2}} \Vert\mathbf{x}_S-\mathbf{x}^{n-1}\Vert_2\nonumber\\
&&\quad+\;\sqrt{2}(\delta_{4s}\tau_1+2\tau_2)\Vert\mathbf{e}^{\prime}\Vert_2.\label{eq:cosamp_5}
\end{eqnarray}
From the step 5 of the $n$-th iteration, we magnify $\Vert\mathbf{x}_S-\mathbf{x}^{n}\Vert_2$  in a different way from SP. Since $\mathbf{x}^{n}$ is obtained by keeping the $s$ largest magnitude entries of $\tilde{\mathbf{x}}^{n}$, we have
\begin{eqnarray}
\!\!\!\!\!\!\!\!&&\!\!\!\!\Vert(\mathbf{x}_S-\mathbf{x}^{n})_{S^{n}}\Vert_2\nonumber\\
\!\!\!\!\!\!\!\!&\le&\!\!\!\!\Vert(\mathbf{x}_S-\tilde{\mathbf{x}}^{n})_{\tilde{S}^{n}}\Vert_2\nonumber\\
\!\!\!\!\!\!\!\!&\overset{(\ref{eq:orthogonality-rip1})}{\le}&\!\!\!\!\delta_{4s}\Vert\mathbf{x}_S-\tilde{\mathbf{x}}^{n}\Vert_2+\sqrt{1+\delta_{4s}}\Vert\mathbf{e}^{\prime}\Vert_2\nonumber\\
\!\!\!\!\!\!\!\!&\overset{(\ref{eq:cosamp_3})}{\le}&\!\!\!\!\sqrt{\dfrac{2\delta_{4s}^4}
{1-\delta_{4s}^2}}\Vert\mathbf{x}_S-\mathbf{x}^{n-1}\Vert_2\nonumber\\
\!\!\!\!\!\!\!\!&&\!\!\!\!+\!\left(\!\delta_{4s}\dfrac{\sqrt{2(1-\delta_{4s})}
+\sqrt{1+\delta_{4s}}}{1-\delta_{4s}}+\sqrt{1+\delta_{4s}}\!\right)\!\Vert\mathbf{e}^{\prime}\Vert_2\nonumber\\
\!\!\!\!\!\!\!\!&=&\!\!\!\!\sqrt{\dfrac{2\delta_{4s}^4}{1-\delta_{4s}^2}}\Vert\mathbf{x}_S-\mathbf{x}^{n-1}\Vert_2+(\delta_{4s}\tau_1+\tau_2)\Vert\mathbf{e}^{\prime}\Vert_2.\label{eq:cosamp_6}
\end{eqnarray}
Dividing supp($\mathbf{x}_S-\mathbf{x}^{n}$) into two disjoint parts: $S^{n},\overline{S^{n}}$, and noticing that $(\mathbf{x}_S-\mathbf{x}^{n})_{\overline{S^{n}}}=(\mathbf{x}_S)_{\overline{S^{n}}}$, we have
\begin{eqnarray*}
\!\!\!\!\!\!\!\!&&\!\!\!\!\Vert\mathbf{x}_S-\mathbf{x}^{n}\Vert_2^2\\
\!\!\!\!\!\!&=&\!\!\!\!\Vert(\mathbf{x}_S-\mathbf{x}^{n})_{S^{n}}\Vert_2^2+\Vert(\mathbf{x}_S
-\mathbf{x}^{n})_{\overline{S^{n}}}\Vert_2^2\\
\!\!\!\!\!\!&=&\!\!\!\!\Vert(\mathbf{x}_S-\mathbf{x}^{n})_{S^{n}}\Vert_2^2+\Vert(\mathbf{x}_S)_{\overline{S^{n}}}\Vert_2^2\nonumber\\
\!\!\!\!\!\!&\overset{(\ref{eq:cosamp_6}),(\ref{eq:cosamp_5})}{\le}&
\!\!\!\!\!\!\Bigg(\sqrt{\dfrac{2\delta_{4s}^4}{1-\delta_{4s}^2}}
\Vert\mathbf{x}_S-\mathbf{x}^{n-1}\Vert_2+(\delta_{4s}\tau_1+\tau_2)
\Vert\mathbf{e}^{\prime}\Vert_2\Bigg)^{\!\!2}\\
\!\!\!\!\!\!\!\!\!\!&+&\!\!\!\!\!\!\!\!\!\!\!\!\left(\!\!
\sqrt{\!\dfrac{2\delta_{4s}^2 (1\!+\!\delta_{4s}^2)}{1\!-\!\delta_{4s}^2}}\Vert\mathbf{x}_S
\!-\!\mathbf{x}^{n-1}\Vert_2+\sqrt{2}(\delta_{4s}\tau_1 \!+\!2\tau_2)\Vert\mathbf{e}^{\prime}\Vert_2\!\!\right)^{\!\!2}\nonumber\\
\!\!\!\!\!\!\!\!\!\!&\overset{(\ref{lem3})}{\le}&\!\!\!\!\!\!\!\!
\Bigg(\sqrt{\dfrac{2\delta_{4s}^2(1+2\delta_{4s}^2)}{1-\delta_{4s}^2}}
\Vert\mathbf{x}_S-\mathbf{x}^{n-1}\Vert_2\\
\!\!\!\!\!\!\!\!\!\!&&+\;((\sqrt{2}+1)\delta_{4s}\tau_1+(2\sqrt{2}+1)\tau_2)\Vert\mathbf{e}^{\prime}\Vert_2\Bigg)^2,\label{eq:cosamp_7}
\end{eqnarray*}
or
\begin{eqnarray*}
\Vert\mathbf{x}_S-\mathbf{x}^{n}\Vert_2
&\le&\sqrt{\dfrac{2\delta_{4s}^2(1+2\delta_{4s}^2)}{1-\delta_{4s}^2}}
\Vert\mathbf{x}_S-\mathbf{x}^{n-1}\Vert_2\\
&&+\;((\sqrt{2}+1)\delta_{4s}\tau_1+(2\sqrt{2}+1)\tau_2)\Vert\mathbf{e}^{\prime}\Vert_2\\
&=&\rho\Vert\mathbf{x}_S-\mathbf{x}^{n-1}\Vert_2+(1-\rho)\tau\Vert\mathbf{e}^{\prime}\Vert_2,
\end{eqnarray*}
where $\rho$ and $\tau$ is respectively referred to (\ref{eq:cosamp_tau0}) and (\ref{eq:cosamp_tau}). Hence, (\ref{eq:cosamp_main_result}) follows by recursively using the above inequality when $\rho<1$. Note that $\rho<1$ if \begin{equation*}
4\delta_{4s}^4+3\delta_{4s}^2-1<0\qquad \mbox{or} \quad \delta_{4s}<1/2=0.5,
\end{equation*}
which finishes the proof.
\end{proof}

\begin{remark}
While Foucart \cite{foucart2012sparse} gives $\delta_{4s}<0.38427 $ with $\rho<1$ and $\delta_{4s}<0.22665$ with $\rho<1/2$, it is easy to see from Theorem \ref{thm:cosamp} that $\delta_{4s}<0.5$ with $\rho<1$ and $\delta_{4s}<0.3083$ with $\rho<1/2$. Hence, the proposed result improves the theoretical guarantee for CoSaMP.
\end{remark}

\section{\label{sec:conclusion}Conclusion}
In this paper, we improve the RICs for both the SP and CoSaMP algorithms. Firstly, for the $s$-sparse recovery, the RICs for SP are enlarged to $\delta_{3s}<0.4859$ with convergence rate $\rho<1$ and $\delta_{3s}<0.3063$  with $\rho<1/2$. Moreover, we show that the recursive formula (\ref{thm-eq}) by the approximation metric $\Vert\mathbf{x}_S-\mathbf{x}^{n}\Vert_2$ and the recursive formula (\ref{metric2}) by the approximation metric $\Vert(\mathbf{x}_S)_{\overline{S^{n}}}\Vert_2$ have the same convergence rate $\rho$.
Then, we deal with the CoSaMP algorithm and show that for the $s$-sparse recovery, the RICs for CoSaMP can be enlarged to $\delta_{4s}<0.5$ with $\rho<1$ and $\delta_{4s}<0.3083$ with $\rho<1/2$. Very recently, \cite{cai2012sharp,cai2013sparse} get sharp RIP bounds for BP. One may wonder whether similar results could be obtained for greedy pursuits or not. Future works may focus on the sharp RIP bounds for greedy pursuits.

\appendix

\subsection{\label{sec:proof-of-rip}Proof of Lemma \ref{lem:rip}}

\begin{enumerate}
\item
By the definition of RIC and the fact that an $s$-sparse vector is also an $s^{\prime}$-sparse vector, we have for any $s$-sparse vector $\mathbf{x}$,
\[
(1-\delta_{s^{\prime}})\Vert\mathbf{x}\Vert_2^2\le\Vert\mathbf{\Phi}\mathbf{x}\Vert_2^2\le(1+\delta_{s^{\prime}})\Vert\mathbf{x}\Vert_2^2 .\]
Since $\delta_s$ is the infimum of all parameters satisfying (\ref{eq:origin_def}), $$\delta_s\le\delta_{s^{\prime}}.$$
\item Let $T=\text{supp}(\mathbf{u})\cup\text{supp}(\mathbf{v})$. Then $|T|\le t$. Let $\mathbf{u}_{|T},\mathbf{v}_{|T}$ denote respectively the $T$-dimensional sub-vectors of $\mathbf{u}$ and $\mathbf{v}$ obtained by only keeping the components indexed by $T$. It follows that
    \begin{eqnarray}
    &&|\langle\mathbf{u},(\mathbf{I}-\mathbf{\Phi}^{\! *}\mathbf{\Phi})\mathbf{v}\rangle| \nonumber\\ &=&|\langle\mathbf{u},\mathbf{v}\rangle-\langle\mathbf{\Phi}
    \mathbf{u},\mathbf{\Phi}\mathbf{v}\rangle|\nonumber\\
    &=&|\langle\mathbf{u}_{|T},\mathbf{v}_{|T}\rangle-\langle\mathbf{\Phi}_{T}
    \mathbf{u}_{|T},\mathbf{\Phi}_{T}\mathbf{v}_{|T}\rangle|\nonumber\\
    &=&|\langle\mathbf{u}_{|T},\mathbf{v}_{|T}\rangle-\langle\mathbf{u}_{|T},
    \mathbf{\Phi}_{T}^{\! *}\mathbf{\Phi}_{T}\mathbf{v}_{|T}\rangle|\nonumber\\
    &=&|\langle\mathbf{u}_{|T},(\mathbf{I}-\mathbf{\Phi}^{\! *}_{T}\mathbf{\Phi}_{T})\mathbf{v}_{|T}\rangle|\nonumber\\
    &\le& \Vert\mathbf{u}_{|T}\Vert_2\Vert(\mathbf{I}-\mathbf{\Phi}_{T}^{\! *}\mathbf{\Phi}_{T})\mathbf{v}_{|T}\Vert_2\label{p1}\\
    &\overset{(\ref{eq:new_def2})}{\le}&\Vert\mathbf{u}_{|T}\Vert_2\Vert\mathbf{I}-\mathbf{\Phi}_{T}^{\! *}\mathbf{\Phi}_{T}\Vert_{2\rightarrow 2}\Vert\mathbf{v}_{|T}\Vert_2\nonumber\\
    &\overset{(\ref{eq:new_def})}{\le}&\delta_{t}\Vert\mathbf{u}_{|T}
    \Vert_2\Vert\mathbf{v}_{|T}\Vert_2\nonumber\\
    &=&\delta_{t}\Vert\mathbf{u}\Vert_2\Vert\mathbf{v}\Vert_2,\nonumber
    \end{eqnarray}
where $(\ref{p1})$ is from the Cauchy-Schwartz inequality, and the inequality (\ref{rip11}) follows.
Moreover,
    \begin{eqnarray*}
    &&\Vert((\mathbf{I}-\mathbf{\Phi}^{\! *}\mathbf{\Phi})\mathbf{v})_{U}\Vert_2^2\\
    &=&\langle((\mathbf{I}-\mathbf{\Phi}^{\! *}\mathbf{\Phi})\mathbf{v})_{U},(\mathbf{I}-\mathbf{\Phi}^{\! *}\mathbf{\Phi})\mathbf{v}\rangle\\
    &\overset{(\ref{rip11})}{\le}&\delta_{t}\Vert((\mathbf{I}-\mathbf{\Phi}^{\! *}\mathbf{\Phi})\mathbf{v})_{U}\Vert_2\Vert\mathbf{v}\Vert_2,
    \end{eqnarray*}
    which implies the inequality (\ref{rip12}).
\end{enumerate}

\subsection{\label{sec:proof-of-orghononality-rip}Proof of Lemma \ref{lem:orthogonality-rip}}
\begin{enumerate}
\item
By Remark \ref{rem2} of Lemma \ref{lem:orthogonality}, letting $$\mathbf{z}=(\mathbf{x}_S-\mathbf{z}_{p})_{T},$$ we have
\begin{eqnarray}
&&\langle \mathbf{x}_S-\mathbf{z}_{p}, \mathbf{\Phi}^{ *}\mathbf{\Phi}(\mathbf{x}_S-\mathbf{z}_{p})_{T}\rangle\nonumber\\
&&\qquad\qquad\quad+\langle\mathbf{e}^{\prime},\mathbf{\Phi}(\mathbf{x}_S-\mathbf{z}_{p})_{T}\rangle=0. \label{eq:tmp1}
\end{eqnarray}
Noticing that $$\text{supp}(\mathbf{z}_{p})\subseteq T, \text{supp}(\mathbf{x}_S-\mathbf{z}_{p})\subseteq S\cup T, \text{supp}((\mathbf{x}_S-\mathbf{z}_{p})_T)\subseteq T,$$ we have
\begin{eqnarray}
&&\Vert(\mathbf{x}_S-\mathbf{z}_{p})_{T}\Vert_2^2\nonumber\\
&=&\langle \mathbf{x}_S-\mathbf{z}_{p}, (\mathbf{x}_S-\mathbf{z}_{p})_{T} \rangle \nonumber\\
&\overset{(\ref{eq:tmp1})}{=}&\langle\mathbf{x}_S
-\mathbf{z}_{p},(\mathbf{I}-\mathbf{\Phi}^{ *}\mathbf{\Phi})(\mathbf{x}_S-\mathbf{z}_{p})_{T}\rangle\nonumber\\
&&\qquad-\langle\mathbf{e^{\prime}},\mathbf{\Phi}(\mathbf{x}_S
-\mathbf{z}_{p})_{T}\rangle\nonumber\\
&\overset{(\ref{rip11})}{\le}&\delta_{s+t}\Vert(\mathbf{x}_S
-\mathbf{z}_{p})_{T}\Vert_2\Vert\mathbf{x}_S-\mathbf{z}_{p}\Vert_2\nonumber\\
&&\qquad +|\langle\mathbf{e^{\prime}},\mathbf{\Phi}(\mathbf{x}_S
-\mathbf{z}_{p})_{T}\rangle|\nonumber\\
&\le&\delta_{s+t}\Vert(\mathbf{x}_S-\mathbf{z}_{p})_{T}
\Vert_2\Vert\mathbf{x}_S-\mathbf{z}_{p}\Vert_2\nonumber\\
&&\qquad + \Vert\mathbf{e}^{\prime}\Vert_2\Vert\mathbf{\Phi}
(\mathbf{x}_S-\mathbf{z}_{p})_{T}\Vert_2\label{p2}\\
&\overset{(\ref{eq:origin_def})}{\le}&\delta_{s+t}\Vert(\mathbf{x}_S
-\mathbf{z}_{p})_{T}\Vert_2\Vert\mathbf{x}_S-\mathbf{z}_{p}\Vert_2\nonumber\\
&&\qquad + \Vert\mathbf{e}^{\prime}\Vert_2\sqrt{1+\delta_{t}} \Vert(\mathbf{x}_S-\mathbf{z}_{p})_{T}\Vert_2,\label{t11}
\end{eqnarray}
where the inequality (\ref{p2}) is from the well-known Cauchy-Schwartz inequality.
After both sides of (\ref{t11}) are divided by $\Vert(\mathbf{x}_S-\mathbf{z}_{p})_{T}\Vert_2$, the claim (\ref{eq:orthogonality-rip1}) in the lemma follows.

\item
By dividing the indices of $\mathbf{x}_S-\mathbf{z}_{p}$ into two disjoint parts: $T$ and $\overline{T}$, we find the relations between $\Vert\mathbf{x}_S-\mathbf{z}_{p}\Vert_2$ and $\Vert(\mathbf{x}_S)_{\overline{T}}\Vert_2$ as follows.
Noticing that $$\text{supp}(\mathbf{z}_p)\subseteq T \mbox{ and } \Vert(\mathbf{x}_S-\mathbf{z}_{p})_{\overline{T}}\Vert_2
=\Vert(\mathbf{x}_S)_{\overline{T}}\Vert_2,$$
we have
\begin{align}
&\quad\;\Vert\mathbf{x}_S-\mathbf{z}_{p}\Vert_2^2\nonumber\\
&=\Vert(\mathbf{x}_S-\mathbf{z}_{p})_{\overline{T}}\Vert_2^2+\Vert(\mathbf{x}_S-\mathbf{z}_{p})_{T}\Vert_2^2\nonumber \\
&=\Vert(\mathbf{x}_S)_{\overline{T}}\Vert_2^2+\Vert(\mathbf{x}_S-\mathbf{z}_{p})_{T}\Vert_2^2\nonumber \\
&\overset{(\ref{eq:orthogonality-rip1})}{\le}\Vert(\mathbf{x}_S)_{\overline{T}}\Vert_2^2+(\delta_{s+t}\Vert\mathbf{x}_S-\mathbf{z}_{p}\Vert_2+\sqrt{1+\delta_{t}}\Vert\mathbf{e}^{\prime}\Vert_2)^2.\qquad\nonumber
\end{align}
Define $\omega:=\Vert\mathbf{x}_S-\mathbf{z}_{p}\Vert_2$. After arrangement, we have
\begin{align}
&(1-\delta_{s+t}^2)\omega^2-2\delta_{s+t}\sqrt{1+\delta_{t}}
\Vert\mathbf{e}^{\prime}\Vert_2 \omega \nonumber\\ &\qquad\qquad-((1+\delta_{t})\Vert\mathbf{e}^{\prime}\Vert_2^2
+\Vert(\mathbf{x}_S)_{\overline{T}}\Vert_2^2)\;\le\;0.\label{eq:22}
\end{align}
Solving the quadratic inequality (\ref{eq:22}) with $\omega$, we have
\begin{align}
&\Vert\mathbf{x}_S-\mathbf{z}_{p}\Vert_2 =\omega\le\dfrac{\delta_{s+t}\sqrt{1+\delta_{t}}
\Vert\mathbf{e}^{\prime}\Vert_2}{1-\delta_{s+t}^2} \nonumber\\ &\qquad\quad+\dfrac{\sqrt{(1+\delta_{t})\Vert\mathbf{e}^{\prime}\Vert_2^2+(1-\delta_{s+t}^2)\Vert(\mathbf{x}_S)_{\overline{T}}\Vert_2^2}}{1-\delta_{s+t}^2}.\nonumber
\end{align}
Using the inequality $\sqrt{a^2+b^2}\le a+b$ for $a,b\ge 0$ and after a little simplification, we have
\begin{align}
\Vert\mathbf{x}_S-\mathbf{z}_{p}\Vert_2\le\sqrt{\dfrac{1}{1-\delta_{s+t}^2}}\Vert(\mathbf{x}_S)_{\overline{T}}\Vert_2+\dfrac{\sqrt{1+\delta_{t}}}{1-\delta_{s+t}}\Vert\mathbf{e}^{\prime}\Vert_2.\nonumber
\end{align}
This completes the proof of (\ref{eq:orthogonality-rip2}) in the lemma.
\item
The basic idea is to find a subset $T^{\prime}\subseteq T$ such that $T^{\prime}\cap S=\emptyset$. This idea is initially proposed in Dai and Milenkovic \cite{dai2009subspace}.
However, we get a tighter upper bound of $\Vert(\mathbf{x}_S)_{T_{\nabla}}\Vert$.

Since $t>s$, there is a set $T^{\prime}\subseteq T\setminus S$ with  $|T^{\prime}|=t-s$. Since $T_{\nabla}$ is defined by the set of indices of the $t-s$ smallest entries of $\mathbf{z}_{p}$ in $T$, we have
\begin{align*}
\Vert(\mathbf{z}_{p})_{T_{\nabla}}\Vert_2 \le\Vert(\mathbf{z}_{p})_{T^{\prime} }\Vert_2.
\end{align*}
By eliminating the contribution on $T_{\nabla}\cap T^{\prime}$, we have
\begin{eqnarray}
\Vert(\mathbf{z}_{p})_{T_{\nabla}\!\backslash \!T^{\prime}}\!\Vert_2 \le \Vert(\mathbf{z}_{p})_{T^{\prime}\!\backslash\! T_{\nabla}}\!\Vert_2
=\Vert(\mathbf{z}_{p}\!-\!\mathbf{x}_S)_{T^{\prime}\!\backslash \! T_{\nabla}}\!\Vert_2,\label{eq:3}
\end{eqnarray}
where the last equality is from $$S\cap T'=\emptyset \mbox{ and } (\mathbf{x}_S)_{T^{\prime}\backslash T_{\nabla}}=\mathbf{0}.$$
For the left-hand side of (\ref{eq:3}), noticing that $$S\cap T'=\emptyset \mbox{ and } (\mathbf{x}_S)_{T_{\nabla}\backslash T^{\prime}}=(\mathbf{x}_S)_{T_{\nabla}},$$ we have
\begin{align}
&\quad\;\Vert(\mathbf{z}_{p})_{T_{\nabla}\backslash T^{\prime}}\Vert_2\nonumber\\
& =\Vert(\mathbf{z}_{p}-\mathbf{x}_S)_{T_{\nabla}\backslash T^{\prime}}+(\mathbf{x}_S)_{T_{\nabla}\backslash T^{\prime}}\Vert_2  \nonumber\\
&=\Vert(\mathbf{z}_{p}-\mathbf{x}_S)_{T_{\nabla}\backslash T^{\prime}}+(\mathbf{x}_S)_{T_{\nabla}}\Vert_2\nonumber\\
&\ge\Vert(\mathbf{x}_S)_{T_{\nabla}}\Vert_2-\Vert(\mathbf{z}_{p}-\mathbf{x}_S)_{T_{\nabla}\backslash T^{\prime}}\Vert_2. \label{eq:32}
\end{align}
Combining (\ref{eq:3}) and (\ref{eq:32}), and noticing $$(T_{\nabla}\backslash T^{\prime} )\cap(T^{\prime} \backslash T_{\nabla})=\emptyset \mbox{ and } (T_{\nabla}\backslash T^{\prime} )\cup(T^{\prime} \backslash T_{\nabla})\subseteq T,$$ we have
\begin{align}
&\quad\;\Vert(\mathbf{x}_S)_{T_{\nabla}}\Vert_2 \nonumber\\ &\le\Vert(\mathbf{z}_{p}-\mathbf{x}_S)_{T_{\nabla}\backslash T^{\prime}}\Vert_2+\Vert(\mathbf{z}_{p}-\mathbf{x}_S)_{T^{\prime} \backslash T_{\nabla}}\Vert_2 \nonumber\\
&\le \sqrt{2}\Vert(\mathbf{z}_{p}-\mathbf{x}_S)_{(T_{\nabla}\backslash T^{\prime} )\cup{(T^{\prime} \backslash T_{\nabla})}}\Vert_2\label{p3}\\
&\le \sqrt{2}\Vert(\mathbf{z}_{p}-\mathbf{x}_S)_{T}\Vert_2\nonumber\\
&\overset{(\ref{eq:orthogonality-rip1})}{\le} \sqrt{2}\delta_{s+t}\Vert\mathbf{x}_S-\mathbf{z}_{p}\Vert_2+\sqrt{2(1+\delta_{t})}\Vert\mathbf{e}^{\prime}\Vert_2,\nonumber
\end{align}
where the inequality $(\ref{p3})$ is from the Cauchy-Schwartz inequality, and the claim (\ref{eq:orthogonality-rip3}) in the lemma follows.
\end{enumerate}

\subsection{\label{sec:proof-of-identification-sp}Proof of Lemma \ref{lem:identification-sp}}
From the step 5 of the $(n-1)$-th iteration, $$\mathbf{x}^{n-1}=\mbox{arg}\min_{\mathbf{z}\in\mathbb{R}^{N}}
\{\Vert\mathbf{y}-\mathbf{\Phi}\mathbf{z}\Vert_2,
\text{supp}(\mathbf{z})\subseteq S^{n-1}\}.$$
By Lemma \ref{lem:orthogonality},
\begin{equation}
(\mathbf{\Phi}^{\! *}(\mathbf{y}-\mathbf{\Phi}\mathbf{x}^{n-1}))_{S^{n-1}}=\mathbf{0}.\label{eq:basic_observation}
\end{equation}

From the step 1 of the $n$-th iteration, $\Delta S$ is the set of $s$ indices corresponding to the $s$ largest magnitude entries in $ \mathbf{\Phi}^{\! *}\, (\mathbf{y}- \mathbf{\Phi} \mathbf{x}^{n-1}).$
Thus,
\begin{align}
\Vert(\mathbf{\Phi}^{\! *}(\mathbf{y}-\mathbf{\Phi}\mathbf{x}^{n-1}))_S\Vert_2 \le\Vert(\mathbf{\Phi}^{\! *}(\mathbf{y}-\mathbf{\Phi}\mathbf{x}^{n-1})_{\Delta S}\Vert_2.\nonumber
\end{align}
Removing the common coordinates in $S \cap \Delta S$, we have
\begin{eqnarray}
&\!\!\!\!\!\!\!\!\Vert(\mathbf{\Phi}^{\! *}\!(\mathbf{y}\!-\!\mathbf{\Phi}\mathbf{x}^{n-1}))_{S\backslash\Delta S}\Vert_2 \le\Vert(\mathbf{\Phi}^{\! *}\!(\mathbf{y}\!-\!\mathbf{\Phi}\mathbf{x}^{n-1}))_{\Delta S\backslash S}\Vert_2.\label{eq:1}
\end{eqnarray}
Since $\text{supp}(\mathbf{x}_S)\subseteq S$ and $\text{supp}(\mathbf{x}^{n-1})\subseteq S^{n-1}$,
\begin{eqnarray}
(\mathbf{x}_S-\mathbf{x}^{n-1})_{\Delta S\backslash(S\cup S^{n-1})}=\mathbf{0}.\label{lem6-1}
\end{eqnarray}
For the right-hand side of (\ref{eq:1}), we have
\begin{eqnarray}
\!\!\!\!\!\!\!\!\!\!\!\!\!\!\!\!&&\!\!\!\!\Vert(\mathbf{\Phi}^{\! *}(\mathbf{y}-\mathbf{\Phi} \mathbf{x}^{n-1}))_{\Delta S\backslash S}\Vert_2 \nonumber\\
\!\!\!\!\!\!\!\!\!\!\!\!\!\!\!\!&\overset{(\ref{eq:basic_observation})}{=}&\!\!\!\!\Vert(\mathbf{\Phi}^{\! *}(\mathbf{y}-\mathbf{\Phi}\mathbf{x}^{n-1}))_{\Delta S\backslash(S\cup S^{n-1})}\Vert_2 \nonumber\\
\!\!\!\!\!\!\!\!\!\!\!\!\!\!\!\!&=&\!\!\!\!\Vert(\mathbf{\Phi}^{\! *}(\mathbf{\Phi}\mathbf{x}_S+\mathbf{e}^{\prime}-\mathbf{\Phi}\mathbf{x}^{n-1}))_{\Delta S\backslash(S\cup S^{n-1})}\Vert_2\nonumber\\
\!\!\!\!\!\!\!\!\!\!\!\!\!\!\!\!&\overset{(\ref{lem6-1})}{=}&\!\!\!\!\Vert((\mathbf{\Phi}^{\! *}\mathbf{\Phi}-\mathbf{I})(\mathbf{x}_S-\mathbf{x}^{n-1})+\mathbf{\Phi}^{\! *}\mathbf{e}^{\prime})_{\Delta S\backslash(S\cup S^{n-1})}\Vert_2\nonumber\\
\!\!\!\!\!\!\!\!\!\!\!\!\!\!\!\!&\le&\!\!\!\!\Vert((\mathbf{\Phi}^{\! *}\mathbf{\Phi}-\mathbf{I})(\mathbf{x}_S-\mathbf{x}^{n-1})+\mathbf{\Phi}^{\! *}\mathbf{e}^{\prime})_{\Delta S\backslash S}\Vert_2\nonumber\\
\!\!\!\!\!\!\!\!\!\!\!\!\!\!\!\!&\le&\!\!\!\!\!\!\Vert((\mathbf{\Phi}^{\! *}\mathbf{\Phi}\!-\!\mathbf{I})(\mathbf{x}_S\!-\!\mathbf{x}^{n-1}))_{\Delta S\backslash S}\Vert_2\!+\Vert(\mathbf{\Phi}^{\! *}\mathbf{e}^{\prime})_{\Delta S\backslash S}\Vert_2.\label{eq:11}
\end{eqnarray}
From the step 2 of the $n$-th iteration, $\tilde{S}^{n}=S^{n-1}\cup\Delta S$. \\
Since $\text{supp}(\mathbf{x}^{n-1})\subseteq S^{n-1}\subseteq \tilde{S^n}$,
\begin{eqnarray}
(\mathbf{x}_S-\mathbf{x}^{n-1})_{S\backslash \tilde{S^n}}=(\mathbf{x}_S)_{\overline{\tilde{S^n}}}.\label{lem6-2}
\end{eqnarray}
For the left-hand side of (\ref{eq:1}), we have
\begin{eqnarray}
&&\Vert(\mathbf{\Phi}^{\! *}(\mathbf{y}-\mathbf{\Phi}\mathbf{x}^{n-1}))_{S\backslash\Delta S}\Vert_2\nonumber\\
&\overset{(\ref{eq:basic_observation})}{=}&\Vert(\mathbf{\Phi}^{\! *}(\mathbf{y}-\mathbf{\Phi}\mathbf{x}^{n-1}))_{S\backslash(\Delta S\cup S^{n-1})}\Vert_2\nonumber\\
&=&\Vert(\mathbf{\Phi}^{\! *}(\mathbf{\Phi}\mathbf{x}_S+\mathbf{e}^{\prime}-\mathbf{\Phi}\mathbf{x}^{n-1}))_{S\backslash\tilde{S}^{n}}\Vert_2\nonumber\\
&\overset{(\ref{lem6-2})}{=}&\Vert((\mathbf{\Phi}^{\! *}\mathbf{\Phi}-\mathbf{I})(\mathbf{x}_S-\mathbf{x}^{n-1}))_{S\backslash \tilde{S}^{n}}\nonumber\\
&&\quad+\;(\mathbf{x}_S)_{\overline{\tilde{S}^{n}}}+(\mathbf{\Phi}^{\! *}\mathbf{e}^{\prime})_{S\backslash \tilde{S}^{n}} \Vert_2\nonumber\\
&\ge&\Vert(\mathbf{x}_S)_{\overline{\tilde{S}^{n}}}\Vert_2
-\Vert(\mathbf{\Phi}^{\! *}\mathbf{e}^{\prime})_{S\backslash\tilde{S}^{n}}\Vert_2\nonumber\\
&&\quad-\;\Vert((\mathbf{\Phi}^{\! *}\mathbf{\Phi}-\mathbf{I}) (\mathbf{x}_S-\mathbf{x}^{n-1}))_{S\backslash\tilde{S}^{n}}\Vert_2
.\label{eq:12}
\end{eqnarray}
Combining (\ref{eq:1}), (\ref{eq:11}) and (\ref{eq:12}), and noticing that $$(\Delta S\backslash S)\cap(S\backslash\tilde{S}^{n})=\emptyset,$$
we have
\begin{eqnarray}
&&\!\!\!\!\Vert(\mathbf{x}_S)_{\overline{\tilde{S}^{n}}}\Vert_2\nonumber\\
&\le&\!\!\!\!\Vert((\mathbf{\Phi}^{\! *}\mathbf{\Phi}-\mathbf{I})(\mathbf{x}_S-\mathbf{x}^{n-1}))_{\Delta S\backslash S}\Vert_2\nonumber\\
&&\!\!\!\!\qquad+\Vert((\mathbf{\Phi}^{\! *}\mathbf{\Phi}-\mathbf{I})(\mathbf{x}_S
-\mathbf{x}^{n-1}))_{S\backslash\tilde{S}^{n}}\Vert_2\nonumber\\
&&\!\!\!\!\qquad+\Vert(\mathbf{\Phi}^{\! *}\mathbf{e}^{\prime})_{\Delta S\backslash S}\Vert_2+\Vert(\mathbf{\Phi}^{\! *}\mathbf{e}^{\prime})_{S\backslash \tilde{S}^{n}}\Vert_2\nonumber\\
&\le&\!\!\!\!\sqrt{2}\Vert((\mathbf{\Phi}^{\! *}\mathbf{\Phi}-\mathbf{I})(\mathbf{x}_S-\mathbf{x}^{n-1}))_{(\Delta S\backslash S)\cup(S\backslash\tilde{S}^{n})}\Vert_2\nonumber\\
&&\!\!\!\!\qquad+\sqrt{2}\Vert(\mathbf{\Phi}^{\! *}\mathbf{e}^{\prime})_{(\Delta S\backslash S)\cup(S\backslash \tilde{S}^{n}) }\Vert_2\label{p4}\\
&\le&\!\!\!\!\sqrt{2}\Vert((\mathbf{\Phi}^{\! *}\mathbf{\Phi}-\mathbf{I})(\mathbf{x}_S-\mathbf{x}^{n-1}))_{\Delta S\cup S}\Vert_2\nonumber\\
&&\!\!\!\!\qquad+\sqrt{2}\Vert(\mathbf{\Phi}^{\! *}\mathbf{e}^{\prime})_{\Delta S\cup S }\Vert_2\nonumber\\
&\overset{(\ref{rip12}), (\ref{rip13})}{\le}&\!\!\!\!\sqrt{2}\delta_{3s}\Vert\mathbf{x}_S- \mathbf{x}^{n-1}\Vert_2+\sqrt{2(1+\delta_{2s})}\Vert\mathbf{e}^{\prime}\Vert_2,\nonumber
\end{eqnarray}
where the inequality $(\ref{p4})$ is from the Cauchy-Schwartz inequality. This completes the proof of Lemma \ref{lem:identification-sp}.

\subsection{\label{sec:proof-of-identification-cosamp}Proof of Lemma \ref{lem:identification-cosamp}}
From the step 1 of the $n$-th iteration, $\Delta S$ is the set of the $2s$ indices corresponding to the  $2s$ largest magnitude entries in $ \mathbf{\Phi}^{\! *}\, (\mathbf{y}- \mathbf{\Phi} \mathbf{x}^{n-1})$.
Thus,
\begin{eqnarray}
\Vert(\mathbf{\Phi}^{\! *}(\mathbf{y}-\mathbf{\Phi}\mathbf{x}^{n-1}))_{S\cup S^{n-1}}\Vert_2\le\Vert(\mathbf{\Phi}^{\! *} (\mathbf{y}-\mathbf{\Phi}\mathbf{x}^{n-1}))_{\Delta S}\Vert_2.\nonumber
\end{eqnarray}
Removing the common coordinates in $(S\cup S^{n-1}) \cap \Delta S$ and noticing that $\mathbf{y}=\mathbf{\Phi}\mathbf{x}_S+\mathbf{e}^{\prime}$, we have
\begin{eqnarray}
&&\!\!\!\!\!\!\!\!\!\!\!\!\!\!\!\!\!\!\!\!\!\!\!\!\Vert(\mathbf{\Phi}^{\! *}(\mathbf{\Phi}\mathbf{x}_S+\mathbf{e}^{\prime}-\mathbf{\Phi}\mathbf{x}^{n-1}))_{(S\cup S^{n-1})\backslash\Delta S}\Vert_2\nonumber\\
&\le&\Vert(\mathbf{\Phi}^{\! *}(\mathbf{\Phi}\mathbf{x}_S+\mathbf{e}^{\prime}-\mathbf{\Phi}\mathbf{x}^{n-1}))_{\Delta S\backslash (S\cup S^{n-1})}\Vert_2.\label{eq:cosamp_first_1}
\end{eqnarray}
For the right-hand side of (\ref{eq:cosamp_first_1}), noticing that $$(\mathbf{x}_S-\mathbf{x}^{n-1})_{\Delta S\backslash(S\cup S^{n-1})}=\mathbf{0},$$
we have
\begin{eqnarray}
&&\Vert(\mathbf{\Phi}^{\! *}(\mathbf{\Phi}\mathbf{x}_S+\mathbf{e}^{\prime}-\mathbf{\Phi}\mathbf{x}^{n-1}))_{\Delta S\backslash(S\cup S^{n-1})}\Vert_2\nonumber\\
&=&\Vert((\mathbf{\Phi}^{\! *}\mathbf{\Phi}-\mathbf{I})(\mathbf{x}_S-\mathbf{x}^{n-1}))_{\Delta S\backslash(S\cup S^{n-1})}\nonumber\\
&&\qquad+(\mathbf{\Phi}^{\! *}\mathbf{e}^{\prime})_{\Delta S\backslash(S\cup S^{n-1})}\Vert_2\nonumber\\
&\le&\Vert((\mathbf{\Phi}^{\! *}\mathbf{\Phi}-\mathbf{I})(\mathbf{x}_S-\mathbf{x}^{n-1}))_{\Delta S\backslash (S\cup S^{n-1})}\Vert_2\nonumber\\
&&\qquad+\Vert(\mathbf{\Phi}^{\! *}\mathbf{e}^{\prime})_{\Delta S\backslash (S\cup S^{n-1})}\Vert_2.\label{eq:cosamp_first_11}
\end{eqnarray}
For the left-hand side of (\ref{eq:cosamp_first_1}), noticing that $$(\mathbf{x}_S-\mathbf{x}^{n-1})_{(S\cup S^{n-1})\backslash\Delta S}=(\mathbf{x}_S-\mathbf{x}^{n-1})_{\overline{\Delta S}},$$
we have
\begin{align}
&\quad\;\Vert(\mathbf{\Phi}^{\! *}(\mathbf{\Phi}\mathbf{x}_S+\mathbf{e}^{\prime}-\mathbf{\Phi}\mathbf{x}^{n-1}))_{(S\cup S^{n-1})\backslash\Delta S}\Vert_2\nonumber\\
&=\Vert(\mathbf{\Phi}^{\! *}\mathbf{\Phi}(\mathbf{x}_S-\mathbf{x}^{n-1}))_{(S\cup S^{n-1})\backslash \Delta S}\nonumber\\
&\qquad+(\mathbf{\Phi}^{\! *}\mathbf{e}^{\prime})_{(S\cup S^{n-1})\backslash \Delta S}\Vert_2\nonumber\\
&=\Vert((\mathbf{\Phi}^{\! *}\mathbf{\Phi}-\mathbf{I})(\mathbf{x}_S-\mathbf{x}^{n-1}))_{(S\cup S^{n-1})\backslash \Delta S}\nonumber\\
&\qquad+(\mathbf{x}_S-\mathbf{x}^{n-1})_{(S\cup S^{n-1})\backslash\Delta S}+(\mathbf{\Phi}^{\! *}\mathbf{e}^{\prime})_{(S\cup S^{n-1})\backslash\Delta S} \Vert_2\nonumber\\
&=\Vert((\mathbf{\Phi}^{\! *}\mathbf{\Phi}-\mathbf{I})(\mathbf{x}_S-\mathbf{x}^{n-1}))_{(S\cup S^{n-1})\backslash \Delta S}\nonumber\\
&\qquad+(\mathbf{x}_S-\mathbf{x}^{n-1})_{\overline{\Delta S}}+(\mathbf{\Phi}^{\! *}\mathbf{e}^{\prime})_{(S\cup S^{n-1})\backslash\Delta S} \Vert_2\nonumber\\
&\ge\Vert(\mathbf{x}_S-\mathbf{x}^{n-1})_{\overline{\Delta S}}\Vert_2-\Vert(\mathbf{\Phi}^{\! *}\mathbf{e}^{\prime})_{(S\cup S^{n-1})\backslash \Delta S}\Vert_2\nonumber\\
&\qquad-\Vert((\mathbf{\Phi}^{\! *}\mathbf{\Phi}-\mathbf{I})(\mathbf{x}_S-\mathbf{x}^{n-1}))_{(S\cup S^{n-1})\backslash \Delta S}\Vert_2.\label{eq:cosamp_first_12}
\end{align}
Combining (\ref{eq:cosamp_first_1}), (\ref{eq:cosamp_first_11}) and (\ref{eq:cosamp_first_12}), we have
\begin{align}
&\quad\;\Vert(\mathbf{x}_S-\mathbf{x}^{n-1})_{\overline{\Delta S}}\Vert_2\nonumber\\
&\le\Vert((\mathbf{\Phi}^{\! *}\mathbf{\Phi}-\mathbf{I})(\mathbf{x}_S-\mathbf{x}^{n-1}))_{\Delta S\backslash(S\cup S^{n-1})}\Vert_2\nonumber\\
&\quad + \Vert((\mathbf{\Phi}^{\! *}\mathbf{\Phi}-\mathbf{I})(\mathbf{x}_S-\mathbf{x}^{n-1}))_{(S\cup S^{n-1})\backslash\Delta S}\Vert_2\nonumber\\
&\quad + \Vert(\mathbf{\Phi}^{\! *}\mathbf{e}^{\prime})_{\Delta S\backslash (S\cup S^{n-1})}\Vert_2+\Vert(\mathbf{\Phi}^{\! *}\mathbf{e}^{\prime})_{(S\cup S^{n-1})\backslash\Delta S}\Vert_2\nonumber\\
&\le\sqrt{2}\Vert((\mathbf{\Phi}^{\! *}\mathbf{\Phi}-\mathbf{I})(\mathbf{x}_S-\mathbf{x}^{n-1}))_{\Delta S\cup S\cup S^{n-1}}\Vert_2\nonumber\\
&\quad +\sqrt{2}\Vert(\mathbf{\Phi}^{\! *}\mathbf{e}^{\prime})_{ \Delta S\cup S\cup S^{n-1}}\Vert_2\label{cosamp4}\\
&\overset{(\ref{rip12}), (\ref{rip13})}{\le}\sqrt{2}\delta_{4s}\Vert\mathbf{x}_S- \mathbf{x}^{n-1}\Vert_2+\sqrt{2(1+\delta_{3s})}\Vert\mathbf{e}^{\prime}\Vert_2,\label{eq:cosamp_first_result}
\end{align}
where the inequality $(\ref{cosamp4})$ is from the Cauchy-Schwartz inequality.

From the step 2 of the $n$-th iteration, noticing  that $$\text{supp}(\mathbf{x}^{n-1})\subseteq S^{n-1}\subseteq \tilde S^n \mbox{ and } \Delta S\subseteq \tilde{S}^{n},$$
we have
\begin{eqnarray}
\Vert(\mathbf{x}_S)_{\overline{\tilde{S}^{n}}}\Vert_2=\Vert(\mathbf{x}_S-\mathbf{x}^{n-1})_{\overline{\tilde{S}^{n}}}\Vert_2
\le\Vert(\mathbf{x}_S-\mathbf{x}^{n-1})_{\overline{\Delta S}}\Vert_2.\label{eq:cosamp_first_result2}
\end{eqnarray}

Hence, Lemma \ref{lem:identification-cosamp} follows by combining (\ref{eq:cosamp_first_result}) with (\ref{eq:cosamp_first_result2}).

\bibliographystyle{IEEETran}
\bibliography{Bib}
\end{document}